
\documentclass[preprint,10pt]{style_perso}
\usepackage{amsmath,amsthm,amsfonts}
  \usepackage{paralist}
  \usepackage{graphics} 
  \usepackage{epsfig} 

\usepackage{graphicx}  
\usepackage{epstopdf}
 \usepackage[colorlinks=true]{hyperref}
\hypersetup{urlcolor=blue, citecolor=red}

  \textheight=8.2 true in
   \textwidth=5.0 true in
    \topmargin 30pt
     \setcounter{page}{1}



\newtheorem{theorem}{Theorem}[section]
\newtheorem{corollary}{Corollary}

\newtheorem{lemma}[theorem]{Lemma}
\newtheorem{proposition}{Proposition}

\theoremstyle{definition}
\newtheorem{definition}[theorem]{Definition}
\newtheorem{remark}{Remark}
\newtheorem{assumption}{Assumption}

\DeclareMathAlphabet{\mathpzc}{OT1}{pzc}{m}{it}
\mathchardef\mhyphen="2D

\begin{document}
\begin{frontmatter}



\title{Harmonic analysis and a bentness-like notion in certain finite  Abelian groups over some finite fields}


\author{Laurent Poinsot}
\ead{laurent.poinsot@lipn.univ-paris13.fr} 
\address{Universit\'e Paris 13, Paris Sorbonne Cit\'e, 
   LIPN, CNRS (UMR 7030), France}


\begin{abstract}
It is well-known that degree two finite field extensions can be equipped with a Hermitian-like structure similar to the extension of the complex field over the reals. In this contribution, using this structure, we develop a modular character theory and the appropriate Fourier transform for some particular kind of finite  Abelian groups. Moreover we introduce the notion of bent functions for finite field valued functions rather than usual complex-valued functions, and we study several of their properties. In particular we prove that this bentness notion is a consequence of that of Logachev, Salnikov and Yashchenko, 
introduced in \emph{Bent functions on a finite  Abelian group} (1997). In addition this new bentness notion is also generalized to a vectorial setting. 
\end{abstract}
\begin{keyword}
Finite  Abelian groups \sep character theory \sep Hermitian spaces \sep Fourier transform \sep bent functions.\\
\medskip
\noindent \textbf{2010 Mathematics Subject Classification} Primary 11T24; Secondary 06E75; 11T71
\end{keyword}

\end{frontmatter}
\section{Introduction}
\label{sec1}
The most simple Hermitian structure is given by the complex field $\mathbb{C}$ when equipped with the complex modulus $\overline{z}$ (for $z \in \mathbb{C}$). Although quite simple, this structure has many applications in the theory of harmonic analysis of finite  Abelian groups. Indeed the theory of characters for such groups is explicitly based on the existence of a special subgroup of the multiplicative group $\mathbb{C}^*$, the {\it unit sphere} or {\it group of roots of unity} $\mathcal{S}(\mathbb{C}) = \{\, z \in \mathbb{C} \colon \overline{z}z = 1\,\}$. This multiplicative group contains an isomorphic copy of each possible cyclic group. Thus it is possible to represent an abstract group $G$ as a group $\widehat{G}$ of $\mathcal{S}(\mathbb{C})$-valued functions that preserves the group structure of $G$, called {\it characters}, which is isomorphic to $G$ itself. These characters are the group homomorphisms from $G$ to $\mathcal{S}(\mathbb{C})$. Moreover the dual group $\widehat{G}$ is also an orthogonal basis for the $|G|$-th dimensional vector space of $\mathbb{C}$-valued functions defined on $G$. This property makes it possible to carry out a harmonic analysis on finite  Abelian groups using the (discrete) Fourier transform which is defined as the decomposition of a vector of $\mathbb{C}^G$ in the basis of characters.

Given a degree two extension $\textup{GF}(p^{2n})$ of $\textup{GF}(p^n)$, the Galois field with $p^n$ elements where $p$ is a prime number, we can also define a ``conjugate'' and thus a Hermitian structure on $\textup{GF}(p^{2n})$ in a way similar to the relation $\mathbb{C} / \mathbb{R}$. In particular this makes possible the definition of a unit circle $\mathcal{S}(\textup{GF}(p^{2n}))$ which is a cyclic group of order $p^n + 1$, subgroup of the multiplicative group $\textup{GF}(p^{2n})^*$ of invertible elements. The analogy with $\mathbb{C} / \mathbb{R}$ is extended in this paper by the definition of $\textup{GF}(p^{2n})$-valued characters of finite  Abelian groups $G$ as group homomorphisms from $G$ to $\mathcal{S}(\textup{GF}(p^{2n}))$. But $\mathcal{S}(\textup{GF}(p^{2n}))$ does obviously not contain a copy of each cyclic group. Nevertheless if $d$ divides $p^n + 1$, then the cyclic group $\mathbb{Z}_d$ of modulo $d$ integers  embeds as a subgroup of this particular unit circle. It forces our modular theory of characters to be applied only to direct products of the form $G = \displaystyle \prod_{i=1}^N \mathbb{Z}_{d_i}^{m_i}$ where each $d_i$ divides $p^n + 1$. In addition we prove that these modular characters form an orthogonal basis (by respect to the Hermitian-like structure $\textup{GF}(p^{2n})$ over $\textup{GF}(p^n)$). This decisive property makes it possible the definition of an appropriate notion of Fourier transform for $\textup{GF}(p^{2n})$-valued functions, rather than $\mathbb{C}$-valued ones, defined on $G$,  as their decompositions in the dual basis of characters. In this contribution we largely investigate several properties of this modular version of the Fourier transform similar to classical ones.

Traditionally an important cryptographic criterion can be naturally defined in terms of Fourier transform. Indeed {\it bent functions} are those functions $f : G \rightarrow \mathcal{S}(\mathbb{C})$ such that the magnitude of their Fourier transform $|\widehat{f}(\alpha)|^2$ is constant, equals to $|G|$. Such functions achieve the optimal possible resistance against the famous linear cryptanalysis of secret-key cryptosystems. Now using our theory of characters we can translate the bentness concept in our modular setting in order to treat the case of $\mathcal{S}(\textup{GF}(p^{2n}))$-valued functions defined on a finite  Abelian group $G$. In this paper are also studied some properties of such functions. As a last contribution, we develop a vectorial notion of bent functions that concerns maps from $G$ to $\textup{GF}(p^{2n})^l$ that explicitly uses an Hermitian structure of $\textup{GF}(p^{2n})^l$. 

We warm the reader that the new notion of bentness presented hereafter is introduced as an illustration of this new finite-field valued character theory and its associated Fourier transform. The possible connections between usual bent functions (in particular those with values in a finite-field, see~\cite{Ambrosimov}) and our own definition are not all made clear in the present contribution. This paper should only be seen  as a complete presentation of a general framework about modular harmonic analysis, given by a modular character theory and an associated modular Fourier transform, that should possibly be used for future research  to make interesting connections with cryptographic Boolean functions.  In this contribution we limit ourselves to point out that the objects introduced hereafter share many properties with their well-known counterparts.  Deeper relationships, if they exist, are outside the scope of our current work. Nevertheless we mention the important assertion proved in this contribution: many usual bent functions in the sense of Logachev, Salnikov and Yashchenko (see~\cite{LSY}) are also bent functions in our finite-field setting.

\section{Character theory: the classical approach}
\label{sec2}
In this paper $G$ always denotes a finite Abelian group (in additive representation), $0_G$ is its identity element.  Moreover for all groups $H$, $H^*$ is the set obtained from $H$ by removing its identity element (therefore, $G^*=G \setminus \{\,0_G\,\}$). This last notation is in accordance with the usual notation $\mathbb{N}^*=\mathbb{N}\setminus\{\, 0\,\}$. 

The character theory of finite Abelian groups was originally introduced in order to embed algebraic structures into the complex field $\mathbb{C}$, and therefore to obtain geometric realizations of abstract groups as sets of complex transformations. The main relevant objects are the {\it characters}, {\it i.e.} the group homomorphisms from a finite Abelian group $G$ to the unit circle $\mathcal{S}(\mathbb{C})$ of the complex field. The set of all such characters of $G$ together with point-wise multiplication is denoted by $\widehat{G}$ and called the {\it dual group of} $G$. A classical result claims that $G$ and its dual are isomorphic This property essentially holds because $\mathcal{S}(\mathbb{C})$ contains an isomorphic copy of all cyclic groups. Usually the image in $\widehat{G}$ of $\alpha \in G$ by such an isomorphism is denoted by $\chi_{\alpha}$. The complex vector space $\mathbb{C}^G$ of complex-valued functions defined on $G$ can be equipped with an inner product defined for $f,g\in \mathbb{C}^G$ by
\begin{equation}
\langle f,g \rangle = \displaystyle \sum_{x\in G}f(x)\overline{g(x)}
\end{equation} where $\overline{z}$ denotes the complex conjugate of $z \in \mathbb{C}$. With respect to this Hermitian structure, $\widehat{G}$ is an orthogonal basis, {\it i.e.}
\begin{equation}
\langle \chi_{\alpha},\chi_{\beta} \rangle = \left \{
\begin{array}{l l}
0 & \mbox{if}\ \alpha \not = \beta,\\
|G|& \mbox{if}\ \alpha = \beta
\end{array} \right .
\end{equation}
for $\alpha,\beta \in G^2$. We observe that in particular (replacing $\beta$ by $0_G$),
\begin{equation}
\displaystyle \sum_{x\in G}\chi_{\alpha}(x) = \left \{
\begin{array}{l l}
0& \mbox{if}\ \alpha \not = 0_G,\\
|G|& \mbox{if}\ \alpha = 0_G.
\end{array}\right .
\end{equation}
According to the orthogonality property, the notion of characters leads to some harmonic analysis of finite Abelian groups {\it via} a (discrete) Fourier transform.
\begin{definition}
Let $G$ be a finite Abelian group and $f : G \rightarrow \mathbb{C}$. The (discrete) Fourier transform of $f$ is defined as
\begin{equation}
\begin{array}{l l l l}
\widehat{f}\colon & G& \rightarrow & \mathbb{C}\\
& \alpha & \mapsto & \displaystyle \sum_{x \in G}f(x)\chi_{\alpha}(x)\ .
\end{array}
\end{equation}
\end{definition}
The Fourier transform of a function $f$ is essentially its decomposition in the basis $\widehat{G}$. This transform is invertible and one has an {\it inversion formula} for $f$,
\begin{equation}
\begin{array}{l l l l}
f(x) = \displaystyle \frac{1}{|G|}\sum_{\alpha \in G} \widehat{f}(\alpha)\overline{\chi_{\alpha}(x)}
\end{array}
\end{equation}
for each $x \in G$. More precisely the Fourier transform is an algebra isomorphism from $(\mathbb{C}^G,*)$ to $(\mathbb{C}^G,.)$ where the symbol ``$.$'' denotes the point-wise multiplication of functions, while $*$ is the convolutional product defined for $f,g \in (\mathbb{C}^G)$ by
\begin{equation}
\begin{array}{l l l l}
f*g\colon & G & \rightarrow & \mathbb{C}\\
& \alpha & \mapsto & \displaystyle  \sum_{x \in G}f(x)g(-x+\alpha)
\end{array}
\end{equation}
Since the Fourier transform is an isomorphism between the two algebras, the {\it trivialization of the convolutional product} holds for each $(f,g) \in (\mathbb{C}^{G})^2$ and each $\alpha \in G$,\begin{equation}
\widehat{(f*g)}(\alpha) = \widehat{f}(\alpha)\widehat{g}(\alpha)\ .
\end{equation}
From these two main properties one can establish the following classical results.
\begin{proposition}
Let $G$ be a finite Abelian group and $f,g \in \mathbb{C}^G$. We have
\begin{equation}
\displaystyle \sum_{x \in G}f(x)\overline{g(x)} = \frac{1}{|G|}\sum_{\alpha \in G}\widehat{f}(\alpha)\overline{\widehat{g}(\alpha)}\ \mbox{{\it (Plancherel formula),}}
\end{equation}
\begin{equation}
\displaystyle \sum_{x \in G}|f(x)|^2 = \frac{1}{|G|}\sum_{\alpha \in G}|\widehat{f}(\alpha)|^2\ \mbox{{\it (Parseval equation)}}
\end{equation}
 where $|z|$ is the complex modulus of $z \in \mathbb{C}$.
\end{proposition}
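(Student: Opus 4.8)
The plan is to derive the Plancherel formula directly from the inversion formula together with the very definition of the Fourier transform, and then to obtain the Parseval equation as the special case $g = f$. First I would start from the left-hand side $\sum_{x \in G} f(x)\overline{g(x)}$ and substitute the inversion formula $f(x) = \frac{1}{|G|}\sum_{\alpha \in G}\widehat{f}(\alpha)\overline{\chi_\alpha(x)}$ for the factor $f(x)$. Since all sums are finite, I may freely interchange the summation over $x$ and over $\alpha$ and factor out $\widehat{f}(\alpha)$, which leaves
\begin{equation}
\sum_{x \in G} f(x)\overline{g(x)} = \frac{1}{|G|}\sum_{\alpha \in G}\widehat{f}(\alpha)\sum_{x\in G}\overline{\chi_\alpha(x)}\,\overline{g(x)}\ .
\end{equation}

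Next I would use that complex conjugation is additive and multiplicative to write $\sum_{x\in G}\overline{\chi_\alpha(x)}\,\overline{g(x)} = \overline{\sum_{x\in G}\chi_\alpha(x)g(x)}$, and then recognize the inner sum as exactly $\widehat{g}(\alpha)$ by the definition of the Fourier transform. This immediately yields the Plancherel formula. Setting $g = f$ and using $f(x)\overline{f(x)} = |f(x)|^2$ together with $\widehat{f}(\alpha)\overline{\widehat{f}(\alpha)} = |\widehat{f}(\alpha)|^2$ then gives the Parseval equation with no additional work.

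I do not expect a genuine obstacle here, as the argument is a finite computation; the only points requiring care are the bookkeeping of the complex conjugates and the justification that the inner sum is the conjugate of $\widehat{g}(\alpha)$ rather than $\widehat{g}$ evaluated at another point. As an alternative route, consistent with the two structural properties emphasized above (the inversion formula and the trivialization of convolution), one may instead introduce the auxiliary function $h(x) = \overline{g(-x)}$, observe that the convolution satisfies $(f * h)(0_G) = \sum_{x\in G} f(x)\overline{g(x)}$ and so equals the left-hand side, evaluate $(f*h)(0_G)$ by the inversion formula at $0_G$ (where every $\chi_\alpha(0_G) = 1$), and apply $\widehat{(f*h)}(\alpha) = \widehat{f}(\alpha)\widehat{h}(\alpha)$. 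A short change of variable using $\chi_\alpha(-y) = \overline{\chi_\alpha(y)}$ shows that $\widehat{h}(\alpha) = \overline{\widehat{g}(\alpha)}$, recovering the same identity. I would nevertheless present the direct substitution as the main line of argument, since it is shorter and avoids introducing the auxiliary function $h$.
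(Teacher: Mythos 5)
Your proposal is correct, and your \emph{main} line of argument is genuinely different from the paper's own route, while your ``alternative route'' is precisely the paper's proof. The paper presents this proposition as a consequence of its ``two main properties'' (the inversion formula and the trivialization of the convolutional product), and when it proves the finite-field analogue in Section~\ref{sec5} it does exactly what you relegate to the alternative: it introduces the auxiliary function $\overline{g}\circ I_G$ with $I_G(x)=-x$, notes that $(f*\overline{g}\circ I_G)(0_G)=\sum_{x\in G}f(x)\overline{g(x)}$, evaluates this at $0_G$ by the inversion formula together with $\widehat{(f*h)}=\widehat{f}\,\widehat{h}$, and computes $\widehat{(\overline{g}\circ I_G)}(\alpha)=\overline{\widehat{g}(\alpha)}$. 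Your primary argument --- substituting the inversion formula for $f(x)$ into the left-hand side, exchanging the two finite sums, and recognizing $\sum_{x\in G}\overline{\chi_\alpha(x)}\,\overline{g(x)}=\overline{\widehat{g}(\alpha)}$ --- is correct and is in fact shorter: it uses only the inversion formula and the definition of $\widehat{g}$, with no need for the convolution algebra at all. Note that the identification of the inner sum with $\overline{\widehat{g}(\alpha)}$ (rather than $\widehat{g}$ at some other point) is exact under the paper's convention $\widehat{g}(\alpha)=\sum_{x\in G}g(x)\chi_\alpha(x)$, so the point you flag as needing care does go through cleanly. What the paper's convolution route buys is that it exercises the algebra-isomorphism property of the Fourier transform and transfers verbatim to the modular setting of Section~\ref{sec5}, where $1/|G|$ is replaced by $(|G|\bmod p)^{-1}$; your direct substitution transfers equally well to that setting, since every step is a finite sum manipulation and conjugation ($x\mapsto x^{\sqrt{q}}$) is still a field automorphism there. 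The deduction of Parseval from Plancherel by setting $g=f$ is the same in both treatments.
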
 
This paper is mostly dedicated to the study of a character theory for some finite Abelian groups over some finite fields rather than $\mathbb{C}$. In particular we provide similar results as those from this section. Obviously we need an Hermitian structure over the chosen finite field. This is the content of the next section. 

\section{Hermitian structure over finite fields}
\label{sec3}
In this section we recall some results about an Hermitian structure in some kinds of finite fields. By analogy with the classical theory of characters (recalled in section~\ref{sec2}), this particular structure is involved in the definition of a suitable theory of finite field-valued characters of some finite  Abelian groups which is developed in section~\ref{sec4}. This section is directly inspired from~\cite{NihoBent} of which we follow the notations, and generalized to any characteristic $p$.

Let $p$ be a given prime number and $q$ an even power of $p$, {\it i.e.}, there is $n \in {\mathbb{N}}^*$ such that $q = p^{2n}$, and in particular $q$ is a square. 
\begin{assumption}
From now on the parameters $p,n,q$ are fixed as introduced above.
\end{assumption}
As usually $\textup{GF}(q)$ is the finite field of characteristic $p$ with $q$ elements and by construction $\textup{GF}(\sqrt{q})$ is a subfield of $\textup{GF}(q)$.  The field $\textup{GF}(q)$, as an extension of degree $2$ of $\textup{GF}(\sqrt{q})$, is also a vector space of dimension $2$ over $\textup{GF}(\sqrt{q})$. This situation is similar to the one of $\mathbb{C}$ and $\mathbb{R}$. As $\textup{GF}(q)$ plays the role of $\mathbb{C}$, the Hermitian structure should be provided for it. Again according to the analogy $\mathbb{C}/\mathbb{R}$, we then need to determine a corresponding conjugate. In order to do this we use the {\it Frobenius automorphism} $\textup{Frob}$ of $\textup{GF}(q)$ defined by
\begin{equation}
\begin{array}{l l l l}
\textup{Frob} : & \textup{GF}(q) & \rightarrow & \textup{GF}(q)\\
& x & \mapsto & x^{p}
\end{array}
\end{equation}
and one of its powers
\begin{equation}
\begin{array}{l l l l}
\textup{Frob}_k : & \textup{GF}(q) & \rightarrow & \textup{GF}(q)\\
& x & \mapsto & x^{p^k}\ .
\end{array}
\end{equation}
In particular $\textup{Frob}_1 = \textup{Frob}$.
\begin{definition}
The {\it conjugate of} $x \in \textup{GF}(q)$ {\it over} $\textup{GF}(\sqrt{q})$ is denoted by $\overline{x}$ and defined as 
\begin{equation}
\overline{x} = \textup{Frob}_{n}(x) = x^{p^n} = x^{\sqrt{q}}\ .
\end{equation} 
\end{definition}
\noindent In particular, for every $n\in\mathbb{Z}$, $\overline{n1}=n1$. 
The field extension $\textup{GF}(q)/\textup{GF}(\sqrt{q})$ has amazing similarities with the extension $\mathbb{C}$ over the real numbers in particular regarding the conjugate.
\begin{proposition}
Let $x_1,x_2 \in \textup{GF}(q)^2$, then
\begin{enumerate}
\item $\overline{x_1 + x_2} = \overline{x_1} + \overline{x_2}$,
\item $\overline{-x_1} = -\overline{x_1}$,
\item $\overline{x_1x_2} = \overline{x_1}\ \overline{x_2}$,
\item $\overline{\overline{x_1}} = x_1$.
\end{enumerate}
\end{proposition}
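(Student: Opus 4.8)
The plan is to prove all four items directly from the definition $\overline{x} = x^{p^n}$, exploiting the algebraic features of the Frobenius map in characteristic $p$ together with the defining structure of $\textup{GF}(q)$. None of the four assertions requires more than standard finite-field facts, so the task is really to organize them in the right dependency order.

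For the additivity in item 1, I would first recall the ``freshman's dream'' in characteristic $p$: since the binomial coefficients $\binom{p}{k}$ are divisible by $p$ for $0 < k < p$, one has $(a+b)^p = a^p + b^p$, so that $\textup{Frob}$ is an additive endomorphism of $\textup{GF}(q)$. Iterating $n$ times, $\textup{Frob}_n = \textup{Frob}^{\,n}$ is also additive, which gives $\overline{x_1 + x_2} = (x_1+x_2)^{p^n} = x_1^{p^n} + x_2^{p^n} = \overline{x_1} + \overline{x_2}$. Item 2 then follows formally from item 1: applying additivity to $x_1 + (-x_1) = 0$ and noting $\overline{0} = 0$ yields $\overline{x_1} + \overline{-x_1} = 0$, hence $\overline{-x_1} = -\overline{x_1}$. (Alternatively one checks $(-1)^{p^n} = -1$, treating the cases $p$ odd and $p = 2$ separately.) For the multiplicativity in item 3, I would simply use commutativity of $\textup{GF}(q)$: $\overline{x_1 x_2} = (x_1 x_2)^{p^n} = x_1^{p^n}\, x_2^{p^n} = \overline{x_1}\,\overline{x_2}$.

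Finally, item 4 is where the specific structure of the field enters. Computing $\overline{\overline{x_1}} = (x_1^{p^n})^{p^n} = x_1^{p^{2n}} = x_1^{q}$, I would invoke the fundamental fact that every element of $\textup{GF}(q)$ is a root of $X^q - X$, i.e. $x^q = x$ for all $x \in \textup{GF}(q)$; this gives $\overline{\overline{x_1}} = x_1$, showing that the conjugate is an involution. The only genuinely nonroutine ingredient is this last structural fact — equivalently, that $\textup{Frob}_n$ is an involutive element generating the Galois group of $\textup{GF}(q)/\textup{GF}(\sqrt{q})$ — whereas items 1 through 3 are immediate consequences of working in characteristic $p$ in a commutative ring. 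I therefore expect item 4 to be the one to flag as the point where the hypothesis $q = p^{2n}$ (making $\sqrt{q} = p^n$ and $p^{2n} = q$) is genuinely used.
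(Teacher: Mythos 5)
Your proposal is correct and follows essentially the same route as the paper: the paper disposes of items 1--3 by citing that $\textup{Frob}_n$ is a field homomorphism of $\textup{GF}(q)$ (which you simply prove by hand via the freshman's dream and commutativity), and settles item 4 by the identity $x^q = x$ for all $x \in \textup{GF}(q)$, exactly as you do. Your observation that item 4 is the only place where $q = p^{2n}$ is genuinely needed matches the paper's implicit emphasis.
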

\begin{proof}
The three first points come from the fact that $\textup{Frob}_{n}$ is a field homomorphism of $\textup{GF}(q)$. The last point holds since for each $x \in \textup{GF}(q)$, $x^{q} = x$.
\end{proof}
The {\it relative norm with respect to} $\textup{GF}(q)/\textup{GF}(\sqrt{q})$ is defined as
\begin{equation}
\textup{norm} (x) = x\overline{x}
\end{equation}
for $x \in \textup{GF}(q)$, and it maps $\textup{GF}(q)$ to $\textup{GF}(\sqrt{q})$. We observe that $\textup{norm}(x) \in \textup{GF}(\sqrt{q})$ because $\sqrt{q} + 1$ divides $q - 1$, and $\textup{norm}(x)=0$ if, and only if, $x=0$.\\

The {\it unit circle} of $\textup{GF}(q)$ is defined as the set
\begin{equation}
\mathcal{S}(\textup{GF}(q)) = \{\, x \in \textup{GF}(q) \colon x \overline{x} = 1\,\}
\end{equation}
of all elements having relative norm $1$. By construction $\mathcal{S}(\textup{GF}(q))$ is the group of $(\sqrt{q} + 1)$-th roots of unity, and therefore it is a (multiplicative) cyclic group of order $\sqrt{q} + 1$ since $\textup{GF}(q)^*$ is cyclic and $\sqrt{q} + 1$ divides $q-1$. 
In what follows, $\mathcal{S}(\textup{GF}(q))$ will play exactly the same role as $\mathcal{S}(\mathbb{C})$ in the classical theory of characters.

Now let $\textup{GF}(q)^l$ be the $l$-dimensional vector space over $\textup{GF}(q)$, then the {\it Hermitian dot product} of two vectors $x=(x_1,\ldots,x_l)$ and $y=(y_1,\ldots,y_l)$ of $\textup{GF}(q)^l$ is
\begin{equation}\label{Hdp}
\langle x,y\rangle = \displaystyle \sum_{i=1}^l x_i \overline{y_i}\ .
\end{equation}
Again, this kind of Hermitian dot product has properties similar to the natural Hermitian inner product on complex vector spaces. Let $\alpha,\beta \in \textup{GF}(q)$ and $x,y,z \in \textup{GF}(q)^l$, then
\begin{enumerate}
\item $\langle (\alpha x + \beta y),z\rangle = \alpha \langle x,z \rangle + \beta \langle y,z\rangle$ ({\it linearity}),
\item $\langle x,y \rangle = \overline{\langle y,x\rangle}$ (conjugate symmetry),
\item $\langle x,x\rangle \in \textup{GF}(\sqrt{q})$.
\end{enumerate}
We observe that the canonical basis $B$ of $\textup{GF}(q)^{l}$ over $\textup{GF}(q)$ is \emph{orthonormal} with respect to $\langle\cdot,\cdot\rangle$ ($\langle e,e^{\prime}\rangle=0$ if $e\not=e^{\prime}$, and $\langle e,e\rangle=1$ for all $e,e^{\prime}\in B$), and it is clear that for every $x\in \textup{GF}(q)^l$, $x=\sum_{e\in B}\langle x,e\rangle e$. Nevertheless, contrary to the usual Hermitian situation, it may happen that for a non-zero vector $x\in \textup{GF}(q)^l$, $\langle x,x\rangle=0$ (for instance, consider the situation where $q=2^{2n}$, and $l=2m$). But this dot product is \emph{non-degenerate}: let us assume that for some $x$, $\langle x,y \rangle=0$ for every $y$, then $x=(\underbrace{0,\ldots,0}_{l\ \mathit{factors}})$ (to see this it suffices to let $y$ run over the canonical basis of $\textup{GF}(q)^{l}$). Similarly, by conjugate symmetry, $\langle x,y\rangle=0$ for each $x$ implies that $y=(\underbrace{0,\ldots,0}_{l\ \mathit{factors}})$. Therefore, $\langle \cdot,\cdot\rangle$ defines a \emph{pairing} (see~\cite{Boneh}).  

We denote $\textup{norm}_l (x) = \langle x,x \rangle = \displaystyle \sum_{i=1}^{l}\textup{norm}(x)$ for $x \in \textup{GF}(q)^l$, and finally $\mathcal{S}(\textup{GF}(q)^l)$ is defined as the hypersphere in $\textup{GF}(q)^l$ with center at $(\underbrace{0,\ldots,0}_{l\ \mathit{factors}})$ and radius $1$.
\section{Characters of certain finite  Abelian groups over a finite field}
\label{sec4}
Before beginning some formal developments, one should warn the reader on the limitations of the expected character theory in finite fields. In section~\ref{sec3}, we claimed that $\mathcal{S}(\textup{GF}(q))$ is a cyclic group of order $\sqrt{q} + 1$. Then for each nonzero integer $d$ that divides $\sqrt{q} + 1$, there is a (cyclic) subgroup of $\mathcal{S}(\textup{GF}(q))$ of order $d$,  and this is the unique kind of subgroups. As a character theory is essentially used to faithfully represent an abstract group as an isomorphic group of functions, a copy of such group must be contained in the corresponding unit circle. Then our character theory in $\textup{GF}(q)$ will only apply on groups for which all their factors in a representation as a product direct group of cyclic subgroups divides $\sqrt{q} + 1$.
\begin{assumption}
From now on $d$  always denotes an element of ${\mathbb{N}}^*$ that divides $\sqrt{q} + 1$.
\end{assumption} 
\begin{definition}(and proposition)
The (cyclic) subgroup of $\mathcal{S}(\textup{GF}(q))$ of order $d$ is denoted by $\mathcal{S}_d (\textup{GF}(q))$. In particular, $\mathcal{S}(\textup{GF}(q)) = \mathcal{S}_{\sqrt{q} + 1}(\textup{GF}(q))$. If $u$ is a generator of $\mathcal{S}(\textup{GF}(q))$ then $u^{\frac{\sqrt{q} + 1}{d}}$ is a generator of $\mathcal{S}_d (\textup{GF}(q))$.
\end{definition}
A {\it character} of a finite  Abelian group $G$ {\it with respect to} $\textup{GF}(q)$ (or simply a {\it character}) is a group homomorphism from $G$ to $\mathcal{S}(\textup{GF}(q))$. Since a character $\chi$ is $\mathcal{S}(\textup{GF}(q))$-valued, $\chi(-x) = (\chi(x))^{-1} = \overline{\chi(x)}$, $\textup{norm}(\chi(x)) = 1$ and $\chi(0_G) = 1$ for each $x \in G$. By analogy with the traditional version, we denote by $\widehat{G}$ the set of all characters of $G$ that we call its {\it dual}. When equipped with the point-wise multiplication, $\widehat{G}$ is a finite  Abelian group. One recall that this multiplication is defined as
\begin{equation}
\forall \chi,\chi^{\prime} \in \widehat{G},\ \chi\chi^{\prime}\colon x\mapsto \chi(x) \chi^{\prime}(x)\ .
\end{equation}
As already mentionned in introduction, we focus on a very special kind of finite  Abelian groups: the additive group of modulo $d$ integers $\mathbb{Z}_d$ which is identified with the subset $\{0,\ldots,d-1\}$ of $\mathbb{Z}$.
\begin{theorem}\label{cyclic-case}
The groups $\mathbb{Z}_d$ and $\widehat{\mathbb{Z}_d}$ are isomorphic.
\end{theorem}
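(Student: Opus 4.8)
The plan is to exploit the fact that $\mathbb{Z}_d$ is cyclic, so that any character is completely determined by its value on the generator $1$, and then to match these admissible values bijectively with the elements of the cyclic group $\mathcal{S}_d(\textup{GF}(q))$. The whole point is that the divisibility hypothesis $d \mid \sqrt{q}+1$ guarantees $\mathcal{S}(\textup{GF}(q))$ contains exactly $d$ solutions of $z^d = 1$, which will turn out to be precisely the possible images of the generator.

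First I would fix a generator $u$ of $\mathcal{S}(\textup{GF}(q))$ and set $\omega = u^{(\sqrt{q}+1)/d}$, which by the preceding definition-proposition generates $\mathcal{S}_d(\textup{GF}(q))$, the unique cyclic subgroup of order $d$. For each $a \in \mathbb{Z}_d$ I would define a candidate character $\chi_a : \mathbb{Z}_d \to \mathcal{S}(\textup{GF}(q))$ by $\chi_a(x) = \omega^{ax}$. The first routine verification is that $\chi_a$ is well defined on $\mathbb{Z}_d$: since $\omega$ has order $d$, the value $\omega^{ax}$ depends only on $ax \bmod d$, hence only on the residue classes of $a$ and $x$ modulo $d$; and $\chi_a$ is visibly a group homomorphism into $\mathcal{S}(\textup{GF}(q))$, so $\chi_a \in \widehat{\mathbb{Z}_d}$.

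Next I would show that $\Phi : a \mapsto \chi_a$ is a group isomorphism from $\mathbb{Z}_d$ onto $\widehat{\mathbb{Z}_d}$. That $\Phi$ is a homomorphism is immediate from $\chi_{a+b}(x) = \omega^{(a+b)x} = \chi_a(x)\chi_b(x)$, i.e.\ $\chi_{a+b} = \chi_a \chi_b$ for the point-wise multiplication on $\widehat{\mathbb{Z}_d}$. Injectivity follows because $\chi_a = \chi_0$ forces $\omega^{a} = \chi_a(1) = 1$, so $d \mid a$ and thus $a = 0$ in $\mathbb{Z}_d$, since $\omega$ has order exactly $d$.

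The one point that needs genuine argument---and which I expect to be the main obstacle, as it is where the hypothesis $d \mid \sqrt{q}+1$ is essential---is surjectivity, namely that every character arises as some $\chi_a$. Here I would take an arbitrary $\chi \in \widehat{\mathbb{Z}_d}$ and observe that, because $1$ generates $\mathbb{Z}_d$ and $d \cdot 1 = 0_{\mathbb{Z}_d}$, the element $\zeta := \chi(1) \in \mathcal{S}(\textup{GF}(q))$ satisfies $\zeta^d = \chi(d \cdot 1) = \chi(0_{\mathbb{Z}_d}) = 1$. Thus $\zeta$ is a $d$-th root of unity lying in $\mathcal{S}(\textup{GF}(q))$; since $\mathcal{S}(\textup{GF}(q))$ is cyclic of order $\sqrt{q}+1$ and $d$ divides $\sqrt{q}+1$, the set of such roots is precisely $\mathcal{S}_d(\textup{GF}(q)) = \langle \omega \rangle$, whence $\zeta = \omega^{a}$ for some $a \in \mathbb{Z}_d$. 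As a homomorphism out of the cyclic group $\mathbb{Z}_d$ is determined by its value at the generator, $\chi(x) = \zeta^{x} = \omega^{ax} = \chi_a(x)$ for all $x$, so $\chi = \chi_a = \Phi(a)$. Combined with the previous steps this yields the desired isomorphism $\mathbb{Z}_d \cong \widehat{\mathbb{Z}_d}$.
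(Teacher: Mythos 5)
Your proof is correct and follows essentially the same route as the paper's: both determine a character by its value at the generator $1$, use $\chi(1)^d = \chi(0) = 1$ together with $d \mid \sqrt{q}+1$ to place that value in the unique order-$d$ subgroup $\mathcal{S}_d(\textup{GF}(q))$, and then verify that $a \mapsto \chi_a$ (with $\chi_a(x) = u_d^{ax}$) is a group isomorphism. Your write-up is in fact slightly more careful than the paper's on well-definedness and on why the $d$-th roots of unity in $\mathcal{S}(\textup{GF}(q))$ form exactly $\mathcal{S}_d(\textup{GF}(q))$, but the underlying argument is identical.
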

\begin{proof}
The parameter $d$ has been chosen so that it divides $\sqrt{q} + 1$. Then there is a unique (cyclic) subgroup $\mathcal{S}_d(\textup{GF}(q))$ of $\mathcal{S}(\textup{GF}(q))$ of order $d$. Let $u_d$ be a generator of this group. Then the elements of $\widehat{\mathbb{Z}_d}$ have the form, for $j \in \mathbb{Z}_d$,
\begin{equation}
\chi_j \colon\left\{
\begin{array}{l l l}
\mathbb{Z}_d &\rightarrow & \mathcal{S}_d(\textup{GF}(q))\\
k & \mapsto & (u_d^j)^k = u_d^{jk}.
\end{array}
\right .
\end{equation}
Actually the characters are $\mathcal{S}_d(\textup{GF}(q))$-valued since for each $x \in \mathbb{Z}_d$ and each character $\chi$, $\chi(x) \in \mathcal{S}(\textup{GF}(q))$ by definition, and satisfies $1 = \chi(0) = \chi(dx) = (\chi(x))^d$ and then $\chi(x)$ is a $d$-th root of the unity. Then to determine a character $\chi \in \widehat{\mathbb{Z}_d}$, we need to compute the value of $\chi(k) = \chi(k1)$ for $k \in \{0,\ldots,d-1\}$, which gives
\begin{equation}
\chi(k) = u_d^{jk}\ .
\end{equation}
In this equality, we have denoted $\chi(1)$ by $u_d^j$ for $j \in \{0,\ldots,d-1\}$ since $\chi(1)$ is a $d$-th root of the unity in $\mathcal{S}(\textup{GF}(q))$. Then the character $\chi$ belongs to $\{\,\chi_0,\ldots,\chi_{d-1}\,\}$. Conversely, we observe that for $j \in \{1,\ldots,d-1\}$, the maps $\chi_j$ are group homomorphisms from $\mathbb{Z}_d$ to $\mathcal{S}(\textup{GF}(q))$ so they are elements of $\widehat{\mathbb{Z}_d}$. Let us define the following function.
\begin{equation}
\begin{array}{l l l l}
\Psi:& \mathbb{Z}_d &\rightarrow& \widehat{\mathbb{Z}_d}\\
& j & \mapsto & \chi_j\ .
\end{array}
\end{equation}
We have already seen that it is onto. Moreover, it is also one-to-one (it is sufficient to evaluate $\chi_j = \Psi(j)$ at $1$) and it is obviously a group homomorphism. It is then an isomorphism, so that $\widehat{\mathbb{Z}_d}$ is isomorphic to $\mathbb{Z}_d$. 
\end{proof}
The isomorphism established in theorem~\ref{cyclic-case} between a group and its dual can be generalized as follows.
\begin{proposition}\label{product-case}
$\mathbb{Z}_{d_1} \times \mathbb{Z}_{d_2}$ and $\widehat{(\mathbb{Z}_{d_1}\times \mathbb{Z}_{d_2})}$ are isomorphic.
\end{proposition}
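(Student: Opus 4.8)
The plan is to establish the isomorphism in two stages: first I would show that the dual of a direct product decomposes as the (external) direct product of the duals, and then I would invoke Theorem~\ref{cyclic-case} applied to each cyclic factor. Since $d_1$ and $d_2$ both divide $\sqrt{q}+1$ by the standing assumption, the character theory applies to each factor $\mathbb{Z}_{d_i}$, and the products $\chi_1(x_1)\chi_2(x_2)$ remain in $\mathcal{S}(\textup{GF}(q))$ because the latter is a multiplicative group.

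First I would introduce the candidate map
\begin{equation}
\begin{array}{l l l l}
\Phi\colon & \widehat{\mathbb{Z}_{d_1}}\times\widehat{\mathbb{Z}_{d_2}} & \rightarrow & \widehat{\mathbb{Z}_{d_1}\times\mathbb{Z}_{d_2}}\\
& (\chi_1,\chi_2) & \mapsto & \chi_1\otimes\chi_2
\end{array}
\end{equation}
where $(\chi_1\otimes\chi_2)(x_1,x_2)=\chi_1(x_1)\chi_2(x_2)$. I would check first that $\chi_1\otimes\chi_2$ is indeed a character, that is, a group homomorphism from $\mathbb{Z}_{d_1}\times\mathbb{Z}_{d_2}$ into $\mathcal{S}(\textup{GF}(q))$; this follows immediately from the fact that $\chi_1$ and $\chi_2$ are homomorphisms and that multiplication in $\mathcal{S}(\textup{GF}(q))$ is commutative. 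Next I would verify that $\Phi$ respects the point-wise products on both sides, which is again a direct computation.

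The heart of the argument is surjectivity. Given an arbitrary $\chi\in\widehat{\mathbb{Z}_{d_1}\times\mathbb{Z}_{d_2}}$, I would set $\chi_1(x_1)=\chi(x_1,0)$ and $\chi_2(x_2)=\chi(0,x_2)$; these are characters of the respective factors, and using $(x_1,x_2)=(x_1,0)+(0,x_2)$ together with the homomorphism property of $\chi$ yields $\chi(x_1,x_2)=\chi_1(x_1)\chi_2(x_2)$, so that $\chi=\Phi(\chi_1,\chi_2)$. Injectivity is then obtained by evaluating $\Phi(\chi_1,\chi_2)$ at $(x_1,0)$ and at $(0,x_2)$ and using $\chi_i(0)=1$ to recover each component separately. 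Hence $\Phi$ is an isomorphism and $\widehat{\mathbb{Z}_{d_1}\times\mathbb{Z}_{d_2}}$ is isomorphic to $\widehat{\mathbb{Z}_{d_1}}\times\widehat{\mathbb{Z}_{d_2}}$.

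Finally I would conclude by applying Theorem~\ref{cyclic-case} to each factor, namely $\widehat{\mathbb{Z}_{d_i}}\cong\mathbb{Z}_{d_i}$, so that
\begin{equation}
\widehat{\mathbb{Z}_{d_1}\times\mathbb{Z}_{d_2}}\cong\widehat{\mathbb{Z}_{d_1}}\times\widehat{\mathbb{Z}_{d_2}}\cong\mathbb{Z}_{d_1}\times\mathbb{Z}_{d_2}\ .
\end{equation}
The only point requiring mild care is well-definedness, that is, the fact that the product of two $\mathcal{S}(\textup{GF}(q))$-valued characters again lands in $\mathcal{S}(\textup{GF}(q))$; this is guaranteed by the group structure of the unit circle, so no genuine obstacle arises. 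The same argument extends by an immediate induction to arbitrary finite products of the form $\prod_{i}\mathbb{Z}_{d_i}$, which is presumably the direction in which the paper proceeds next.
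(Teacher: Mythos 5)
Your proof is correct and takes essentially the same route as the paper: both reduce the statement to the isomorphism $\widehat{(\mathbb{Z}_{d_1}\times\mathbb{Z}_{d_2})}\cong\widehat{\mathbb{Z}_{d_1}}\times\widehat{\mathbb{Z}_{d_2}}$ and then invoke Theorem~\ref{cyclic-case} on each cyclic factor. The only cosmetic difference is the direction of the map: you define $(\chi_1,\chi_2)\mapsto\chi_1\otimes\chi_2$ from the product of duals to the dual of the product, whereas the paper writes down the inverse map $\chi\mapsto(\chi\circ i_1,\chi\circ i_2)$ given by restriction along the canonical injections, the bijectivity arguments being the same in both cases.
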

\begin{proof}
The proof is easy since it is sufficient to remark that $\widehat{(\mathbb{Z}_{d_1}\times \mathbb{Z}_{d_2})}$ and $\widehat{\mathbb{Z}_{d_1}}\times \widehat{\mathbb{Z}_{d_2}}$ are isomorphic. We recall that $d_1$ and $d_2$ are both assumed to divide $\sqrt{q} + 1$, thus $\widehat{\mathbb{Z}_{d_1}}$ and $\widehat{\mathbb{Z}_{d_2}}$ exist and are isomorphic to $\mathbb{Z}_{d_1}$ and $\mathbb{Z}_{d_2}$ respectively. Let $i_1$ be the first canonical injection of $\mathbb{Z}_{d_1} \times \mathbb{Z}_{d_2}$ and $i_2$ the second (when $\mathbb{Z}_{d_1} \times \mathbb{Z}_{d_2}$ is seen as a direct sum). The following map
\begin{equation}
\Phi \colon \left\{
\begin{array}{l l l}
\widehat{(\mathbb{Z}_{d_1}\times \mathbb{Z}_{d_2})} & \rightarrow & \widehat{\mathbb{Z}_{d_1}}\times \widehat{\mathbb{Z}_{d_2}}\\
\chi & \mapsto & (\chi \circ i_1,\chi \circ i_2)
\end{array}
\right .
\end{equation}
is a group isomorphism. It is obviously one-to-one and for $(\chi_1,\chi_2) \in \widehat{\mathbb{Z}_{d_1}}\times \widehat{\mathbb{Z}_{d_2}}$, the map $\chi : (x_1,x_2) \mapsto \chi_1(x_1) \chi_2(x_2)$ is an element of $\widehat{(\mathbb{Z}_{d_1}\times \mathbb{Z}_{d_2})}$ and $\Phi(\chi) = (\chi_1,\chi_2)$. Then $\widehat{(\mathbb{Z}_{d_1}\times \mathbb{Z}_{d_2})}$ is isomorphic to $\mathbb{Z}_{d_1} \times \mathbb{Z}_{d_2}$ since $\widehat{Z_{d_i}}$ and $\mathbb{Z}_{d_i}$ are isomorphic (for $i=1,2$).
\end{proof}
From proposition~\ref{product-case} it follows in particular that $\widehat{\mathbb{Z}_{d}^m}$ is isomorphic to $\mathbb{Z}_{d}^m$. 
This result also provides a specific form to the characters of $\mathbb{Z}_d^m$ as follows. We define a dot product, which is a $\mathbb{Z}_d$-bilinear map from $(\mathbb{Z}_d^m)^2$ to $\mathbb{Z}_d$, by 
\begin{equation}
x\cdot y = \displaystyle \sum_{i=1}^{m}x_i y_i \in \mathbb{Z}_d
\end{equation}
for $x,y\in \mathbb{Z}_d^m$. Then the character that corresponds to $\alpha \in \mathbb{Z}_d^m$ can be defined by
\begin{equation}
\begin{array}{l l l l}
\chi_{\alpha} \colon& \mathbb{Z}_d^m & \rightarrow &\mathcal{S}_d(\textup{GF}(q))\\
& x & \mapsto & u_d^{\alpha\cdot x}
\end{array}
\end{equation}
where $u_d$ is a generator of $\mathcal{S}_d(\textup{GF}(q))$. In particular for each $\alpha,x \in \mathbb{Z}_d^m$, $\chi_{\alpha}(x) = \chi_x (\alpha)$. The following result is obvious. 
\begin{corollary}\label{coreasy}
Let $G \cong \displaystyle \prod_{i=1}^N \mathbb{Z}_{d_i}^{m_i}$ be a finite  Abelian group for which each integer $d_i$ divides $\sqrt{q}+1$. Then $G$ and $\widehat{G}$ are isomorphic.
\end{corollary}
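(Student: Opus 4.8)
The plan is to reduce the general statement to the two building blocks already established: the cyclic case (Theorem~\ref{cyclic-case}) and the behaviour of the dual under a binary direct product (Proposition~\ref{product-case}). The crucial observation is that the isomorphism $\Phi$ constructed in the proof of Proposition~\ref{product-case} never really used that the two factors are cyclic: the only ingredients were the two canonical injections $i_1,i_2$ and the fact that the pointwise product $(x_1,x_2)\mapsto \chi_1(x_1)\chi_2(x_2)$ of two characters is again a character. This last point holds for \emph{any} finite Abelian groups because $\mathcal{S}(\textup{GF}(q))$ is closed under multiplication, so the product of two $\mathcal{S}(\textup{GF}(q))$-valued homomorphisms is again $\mathcal{S}(\textup{GF}(q))$-valued. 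Hence the very same argument yields, for any two finite Abelian groups $H_1,H_2$, the natural isomorphism $\widehat{H_1\times H_2}\cong \widehat{H_1}\times\widehat{H_2}$.

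First I would record this strengthened form of Proposition~\ref{product-case} for arbitrary (not necessarily cyclic) finite Abelian factors. Then I would iterate it by induction on the number $k$ of factors to get $\widehat{\prod_{j=1}^{k}H_j}\cong \prod_{j=1}^{k}\widehat{H_j}$: the base case $k=1$ is trivial, and the inductive step writes $\prod_{j=1}^{k}H_j \cong \big(\prod_{j=1}^{k-1}H_j\big)\times H_k$ and applies the binary case together with the induction hypothesis. Note that the intermediate groups $\prod_{j=1}^{k-1}H_j$ are again products of the $\mathbb{Z}_{d_i}$, so all their characters exist and the binary dual-of-product isomorphism applies at each step.

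Finally I would apply this to $G\cong \prod_{i=1}^{N}\mathbb{Z}_{d_i}^{m_i}$, viewed as a direct product of $\sum_{i=1}^N m_i$ cyclic factors, each of the form $\mathbb{Z}_{d_i}$ with $d_i \mid \sqrt{q}+1$. The iterated splitting gives
\begin{equation}
\widehat{G}\cong \prod_{i=1}^{N}\big(\widehat{\mathbb{Z}_{d_i}}\big)^{m_i},
\end{equation}
and Theorem~\ref{cyclic-case} identifies each $\widehat{\mathbb{Z}_{d_i}}$ with $\mathbb{Z}_{d_i}$, whence $\widehat{G}\cong \prod_{i=1}^{N}\mathbb{Z}_{d_i}^{m_i}\cong G$.

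I do not expect any genuine obstacle here: the whole content is already contained in Theorem~\ref{cyclic-case} and Proposition~\ref{product-case}, and the only thing to verify carefully is that the proof of the latter goes through verbatim for non-cyclic factors, which it does since it only relies on the multiplicative group structure of $\mathcal{S}(\textup{GF}(q))$ and on the universal property of the direct product through the injections $i_1,i_2$. The mild bookkeeping point to state explicitly is that the divisibility hypothesis $d_i\mid\sqrt{q}+1$ is needed precisely so that each cyclic dual $\widehat{\mathbb{Z}_{d_i}}$ exists and is isomorphic to $\mathbb{Z}_{d_i}$; this is the single assumption that makes the final identification valid.
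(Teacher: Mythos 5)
Your proposal is correct and follows essentially the same route the paper intends: the paper declares the corollary ``obvious'' precisely because it follows by iterating Proposition~\ref{product-case} (whose proof, as you observe, never uses cyclicity of the factors) and then identifying each $\widehat{\mathbb{Z}_{d_i}}$ with $\mathbb{Z}_{d_i}$ via Theorem~\ref{cyclic-case}. You have merely made explicit the induction on the number of factors that the paper leaves implicit, and your remark about where the hypothesis $d_i \mid \sqrt{q}+1$ enters is accurate.
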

\begin{remark}
The fact that $G\cong \widehat{G}$ does not depend on a decomposition of $G$ into a direct sum of cyclic groups. But a particular isomorphism of corollary~\ref{coreasy} depends on the decomposition $\prod_{i=1}^N \mathbb{Z}_{d_i}^{m_i}$ of the group $G$.
\end{remark}
If $G = \displaystyle \prod_{i=1}^N \mathbb{Z}_{d_i}^{m_i}$ satisfies the assumption of the  corollary~\ref{coreasy}, then we can also obtain a specific form for its characters and a specific isomorphism from $G$ to its dual. Let $\alpha = (\alpha_1,\ldots,\alpha_N) \in G$. 
\begin{equation}\label{specific-isomorphism}
\begin{array}{l l l l}
\chi_{\alpha} \colon& G & \rightarrow & \mathcal{S}(\textup{GF}(q))\\
& x=(x_1,\ldots,x_N) & \mapsto & \displaystyle \prod_{i=1}^N u_{d_i}^{\alpha_i \cdot x_i}
\end{array}
\end{equation} 
where for each $i \in \{1,\ldots,N\}$, $u_{d_i}$ is a generator of $\mathcal{S}_{d_i}(\textup{GF}(q))$. In particular for each $\alpha,x \in G^2$, we also have $\chi_{\alpha}(x) = \chi_x (\alpha)$.
\begin{assumption}\label{division-of-the-order}
From now on, each finite  Abelian group $G$ considered is assumed to be of a specific form $\displaystyle \prod_{i=1}^N \mathbb{Z}_{d_i}^{m_i}$ where for each $i \in \{1,\ldots,N\}$, $d_i$ divides $\sqrt{q} + 1$, so that we have at our disposal a specific isomorphism given by the formula~(\ref{specific-isomorphism}) between $G$ and $\widehat{G}$.
\end{assumption}

The dual $\widehat{G}$ of $G$ is constructed and is shown to be isomorphic to $G$. We may also be interested into the \emph{bidual} $\widehat{\widehat{G}}$ of $G$, namely the dual of $\widehat{G}$. Similarly to the usual situation of complex-valued characters, we prove that $G$ and its bidual are canonically isomorphic. It is already clear that $G\cong \widehat{\widehat{G}}$ (because $G\cong\widehat{G}$ and $\widehat{G}\cong \widehat{\widehat{G}}$). But this isomorphism is far from being canonical since it depends on a decomposition of $G$, and of $\widehat{G}$, and choices for generators of each cyclic factor in the given decomposition. We observe that the map $e\colon G\rightarrow \widehat{\widehat{G}}$ defined by $e(x)(\chi)=\chi(x)$ for every $x\in G$, $\chi\in\widehat{G}$ is a group homomorphism.  To prove that it is an isomorphism it suffices to check that $e$ is one-to-one (since $G$ and $\widehat{\widehat{G}}$ have the same order). Let $x\in\ker(e)$. Then, for all $\chi\in\widehat{G}$, $\chi(x)=1$. Let us fix an isomorphism $\alpha\in G\rightarrow \chi_{\alpha}\in \widehat{G}$ as in the formula~(\ref{specific-isomorphism}). Then, for every $\alpha\in G$, $\chi_{\alpha}(x)=1=\chi_{x}(\alpha)$ so that $x=0_G$. Thus we have obtained an appropriate version of Pontryagin-van Kampen duality (see~\cite{Hewitt}). Let us recall that according to the \emph{structure theorem of finite Abelian groups}, for any finite Abelian group $G$, there is a unique finite sequence of positive integers, called the \emph{invariants of $G$},  $d_1,\cdots,d_{\ell_G}$ such that $d_i$ divides $d_{i+1}$ for each $i<\ell_G$.  Let us denote by $\mathpzc{Ab}_{\sqrt{q}+1}$ the category of all finite Abelian groups $G$ such that $d_{\ell_G}$ divides $\sqrt{q}+1$, with usual homomorphisms of groups as arrows. From the previous results, if $G$ is an object of $\mathpzc{Ab}_{\sqrt{q}+1}$, then $G\cong \widehat{G}$. Moreover, $\widehat{(\cdot)}$ defines a contravariant functor (see~\cite{McLane}) from $\mathpzc{Ab}_{\sqrt{q}+1}$ to itself. Indeed, if $\phi\colon G\rightarrow H$ is a homomorphism of groups (where $G,H$ belongs to $\mathpzc{Ab}_{\sqrt{q}+1}$), then $\widehat{\phi}\colon \widehat{H}\rightarrow \widehat{G}$ defined by $\widehat{\phi}(\chi)=\chi\circ \phi$ is a homomorphism of groups. Then, we have the following duality theorem.
\begin{theorem}[Duality]
The covariant (endo-)functor $\widehat{\widehat{(\cdot)}}\colon \mathpzc{Ab}_{\sqrt{q}+1}\rightarrow \mathpzc{Ab}_{\sqrt{q}+1}$ is a (functorial) isomorphism (this means in particular that $G\cong \widehat{\widehat{G}}$).
\end{theorem}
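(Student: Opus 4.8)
The plan is to exhibit the family of maps $e=(e_G)_{G}$, with $e_G\colon G\to\widehat{\widehat{G}}$ defined by $e_G(x)(\chi)=\chi(x)$, as a natural isomorphism between the identity functor and $\widehat{\widehat{(\cdot)}}$. The discussion preceding the statement already furnishes two of the three ingredients: first, $\widehat{(\cdot)}$ is a contravariant functor on $\mathpzc{Ab}_{\sqrt{q}+1}$, so its composite with itself, $\widehat{\widehat{(\cdot)}}$, is automatically a covariant endofunctor; second, each individual component $e_G$ has already been shown to be a group isomorphism (it is a homomorphism, and its injectivity—hence bijectivity, by equality of orders—follows from the relation $\chi_{\alpha}(x)=\chi_x(\alpha)$). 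Thus the only missing piece is the \emph{naturality} of $e$, after which the theorem follows from the standard fact that a natural transformation whose every component is an isomorphism is a functorial isomorphism.

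First I would make the action of $\widehat{\widehat{(\cdot)}}$ on morphisms explicit. Given $\phi\colon G\to H$ in $\mathpzc{Ab}_{\sqrt{q}+1}$, recall that $\widehat{\phi}\colon\widehat{H}\to\widehat{G}$ is $\widehat{\phi}(\psi)=\psi\circ\phi$; applying $\widehat{(\cdot)}$ once more yields $\widehat{\widehat{\phi}}\colon\widehat{\widehat{G}}\to\widehat{\widehat{H}}$ with $\widehat{\widehat{\phi}}(\xi)=\xi\circ\widehat{\phi}$ for $\xi\in\widehat{\widehat{G}}$. Functoriality (preservation of identities and of composition) is then immediate from the corresponding contravariant properties of $\widehat{(\cdot)}$ applied twice.

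The naturality square to verify is $\widehat{\widehat{\phi}}\circ e_G=e_H\circ\phi$ as maps $G\to\widehat{\widehat{H}}$. I would check it by evaluating both sides at an arbitrary $x\in G$ and then at an arbitrary $\psi\in\widehat{H}$. On the left,
\begin{equation}
\bigl(\widehat{\widehat{\phi}}(e_G(x))\bigr)(\psi)=\bigl(e_G(x)\circ\widehat{\phi}\bigr)(\psi)=e_G(x)\bigl(\psi\circ\phi\bigr)=(\psi\circ\phi)(x)=\psi(\phi(x)),
\end{equation}
while on the right $\bigl(e_H(\phi(x))\bigr)(\psi)=\psi(\phi(x))$. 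The two agree, so the square commutes and $e$ is natural.

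Combining the three ingredients, $e$ is a natural transformation from the identity functor to $\widehat{\widehat{(\cdot)}}$ whose components are all isomorphisms; hence it is a functorial isomorphism, and in particular $G\cong\widehat{\widehat{G}}$ for every object $G$. I do not expect a genuine obstacle here: since the componentwise isomorphism $e_G$ and the functoriality of $\widehat{(\cdot)}$ are already in hand, the entire weight of the argument rests on the single naturality computation above, whose only subtlety is keeping track of which space each map acts on and not confusing the induced double-dual map $\widehat{\widehat{\phi}}$ (acting by pre-composition with $\widehat{\phi}$ on elements of $\widehat{\widehat{G}}$) with $\phi$ itself.
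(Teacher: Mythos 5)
Your proposal is correct and takes essentially the same route as the paper: the paper's proof is really the discussion preceding the theorem statement, which constructs the evaluation map $e$ with $e(x)(\chi)=\chi(x)$, shows each component is an isomorphism (injectivity via $\chi_{\alpha}(x)=\chi_{x}(\alpha)$ together with equality of orders), and establishes that $\widehat{(\cdot)}$ is a contravariant functor. The only difference is that you explicitly verify the naturality square $\widehat{\widehat{\phi}}\circ e_G=e_H\circ\phi$, a step the paper leaves implicit (no separate proof follows its theorem statement), so your write-up is, if anything, slightly more complete.
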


\section{Orthogonality relations}
The characters satisfy a certain kind of orthogonality relation. In order to establish it we introduce the natural ``action'' of $\mathbb{Z}$ on any finite field $\textup{GF}(p^l)$ of characteristic $p$ as $kx = x \underbrace{+\ldots+}_{k\ \mathit{times}} x$ for $(k,x) \in \mathbb{Z} \times \textup{GF}(p^l)$. This is nothing else than the fact that the underlying Abelian group structure of $\textup{GF}(p^l)$  is a $\mathbb{Z}$-module. In particular one has for each $(k,k',x) \in \mathbb{Z} \times \mathbb{Z} \times \textup{GF}(p^l)$,
\begin{enumerate}
\item $0x = 0$, $1x = x$ and $k0 = 0$,
\item $(k+k')x = kx + k^{\prime}x$ and then $nkx = n(kx)$,
\item $k1 \in \textup{GF}(p)$, $k1 = (k\bmod p)1$, $k^m1 = (k1)^m$ and if $k \bmod p \not = 0$, then $(k1)^{-1} = (k \bmod p)^{-1}1$.
\end{enumerate}
In the remainder we identify $k1$ with $k \bmod p$ or in other terms we make an explicit identification of $\textup{GF}(p)$ by $\mathbb{Z}_p$.
\begin{lemma}\label{lem_sum_car_0_ou_card_G}
Let $G$ be a finite  Abelian group. For $\chi \in \widehat{G}$,
\begin{equation}
\displaystyle \sum_{x \in G} \chi(x) = \left \{
\begin{array}{l l}
0 & \mbox{if}\ \chi \not = 1\ ,\\
(|G| \bmod p) &\mbox{if}\ \chi = 1\ .
\end{array}
\right .
\end{equation}
\end{lemma}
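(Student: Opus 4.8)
The plan is to mimic the classical argument recalled in Section~\ref{sec2}, replacing the use of the nonvanishing of $|G|$ in $\mathbb{C}$ by the fact that $\textup{GF}(q)$ is an integral domain. Throughout I write $S = \displaystyle\sum_{x \in G} \chi(x) \in \textup{GF}(q)$, and I split the proof according to whether $\chi$ is trivial.

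First I would dispose of the case $\chi = 1$. Here $\chi(x) = 1$ for every $x \in G$, so $S = \sum_{x \in G} 1 = |G|1$. Since $\textup{GF}(q)$ has characteristic $p$, this scalar depends only on $|G| \bmod p$, and under the identification $k1 = k \bmod p$ recalled just before the statement we obtain exactly $S = (|G| \bmod p)$, as claimed.

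The heart of the matter is the case $\chi \neq 1$. Because $\chi$ is not the trivial character, there exists some $y \in G$ with $\chi(y) \neq 1$. The key step is the translation invariance of the sum: since $x \mapsto y + x$ is a bijection of $G$ and $\chi$ is a homomorphism,
\begin{equation}
\chi(y) S = \sum_{x \in G} \chi(y)\chi(x) = \sum_{x \in G} \chi(y + x) = \sum_{z \in G} \chi(z) = S\ .
\end{equation}
Hence $(\chi(y) - 1) S = 0$ in $\textup{GF}(q)$. At this point I would invoke the fact that $\textup{GF}(q)$ is a field, hence has no zero divisors: since $\chi(y) - 1 \neq 0$ by the choice of $y$, it must be that $S = 0$.

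I expect the only genuinely delicate point to be bookkeeping rather than conceptual. In the complex setting the trivial-character sum is literally $|G|$, whereas here the reduction modulo $p$ forced by the characteristic changes the value to $|G| \bmod p$; one must be careful to phrase this through the identification $k1 \leftrightarrow k \bmod p$ set up before the lemma. The nontrivial case then carries over verbatim from the classical proof, the one substantive input being that $\textup{GF}(q)$ is an integral domain, so that $(\chi(y)-1)S = 0$ may be cancelled. It is worth noting, although not strictly needed for this computation, that since $\sqrt{q} = p^n$ one has $\sqrt{q} + 1 \equiv 1 \pmod p$, so every admissible $d$ is prime to $p$; this is precisely what guarantees that the character values genuinely exhaust $d$ distinct roots of unity, and it confirms there is no hidden degeneracy lurking in the argument.
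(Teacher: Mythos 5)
Your proof is correct and follows essentially the same route as the paper's: the trivial case reduces $\sum_{x\in G}1$ to $|G|\bmod p$ via the characteristic, and the nontrivial case uses the translation trick $\chi(y)S = S$ together with cancellation of the nonzero factor $\chi(y)-1$ in the field $\textup{GF}(q)$. The only difference is cosmetic — you make the ``no zero divisors'' justification explicit where the paper leaves it implicit — so nothing further is needed.
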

\begin{proof}
If $\chi = 1$, then $\displaystyle \sum_{x \in G}1 = (|G|\bmod p)$ since the characteristic of $\textup{GF}(q)$ is equal to $p$. Let us suppose that $\chi \not =1$. Let $x_0 \in G$ such that $\chi(x_0) \not = 1$. Then we have 
\begin{equation}
\displaystyle \chi(x_0)\sum_{x \in G}\chi(x) = \sum_{x \in G}\chi(x_0 + x) = \sum_{y \in G}\chi(y),
\end{equation}
so that $\displaystyle (\chi(x_0) - 1)\sum_{x \in G}\chi(x) = 0$ and thus $\displaystyle \sum_{x \in G}\chi(x) = 0$ (because $\chi(x_0) \not = 1$). 
\end{proof}
This technical lemma allows us to define the orthogonality relation between characters.
\begin{definition}
Let $G$ be a finite  Abelian group. Let $f,g \in \textup{GF}(q)^G$. We define the ``inner product'' of $f$ and $g$ by
\begin{equation}
\langle f,g\rangle = \displaystyle \sum_{x \in G}f(x)\overline{g(x)} \in \textup{GF}(q).
\end{equation}
\end{definition}
\noindent The above definition does not ensure that $\langle f,f\rangle=0$ implies that $f\equiv 0$ as it holds for a true inner product. Indeed, take $q=2^{2n}$, and let $f\colon \mathbb{Z}_2\rightarrow \textup{GF}(2^{2n})$ be the constant map with value $1$. Then, $\langle f,f\rangle=0$. Thus, contrary to a usual Hermitian dot product, an orthogonal family (with respect to $\langle\cdot,\cdot\rangle$) of $\textup{GF}(q)^G$ is not necessarily $\textup{GF}(q)$-linearly independent. 
\begin{proposition}[Orthogonality relation]
Let $G$ be a finite  Abelian group. For all $(\chi_1,\chi_2) \in \widehat{G}^2$ then
\begin{equation}
\langle \chi_1,\chi_2 \rangle = \left \{
\begin{array}{l l}
0 & \mbox{if}\ \chi_1 \not = \chi_2,\\
|G|\bmod p & \mbox{if}\ \chi_1 = \chi_2.
\end{array}
\right .
\end{equation}
\end{proposition}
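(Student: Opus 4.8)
The plan is to reduce the bilinear sum defining $\langle \chi_1,\chi_2\rangle$ to a single character sum and then invoke Lemma~\ref{lem_sum_car_0_ou_card_G}, which already encapsulates all the real content. First I would unfold the definition of the inner product to write $\langle \chi_1,\chi_2\rangle = \sum_{x\in G}\chi_1(x)\overline{\chi_2(x)}$. The crucial observation is that every character is $\mathcal{S}(\textup{GF}(q))$-valued, so by definition of the unit circle and the multiplicativity of the conjugate established earlier, one has $\chi_2(x)\overline{\chi_2(x)} = \textup{norm}(\chi_2(x)) = 1$, whence $\overline{\chi_2(x)} = \chi_2(x)^{-1}$ for every $x\in G$. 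The generic summand therefore becomes $\chi_1(x)\chi_2(x)^{-1}$, which is precisely the value at $x$ of the pointwise product $\chi_1\chi_2^{-1}$.

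Next I would use that $\widehat{G}$ is a group under pointwise multiplication, so that $\chi_1\chi_2^{-1}$ is again an element of $\widehat{G}$, i.e.\ a genuine character of $G$. This lets me rewrite
\begin{equation}
\langle \chi_1,\chi_2\rangle = \sum_{x\in G}(\chi_1\chi_2^{-1})(x),
\end{equation}
a sum of a \emph{single} character over all of $G$, to which Lemma~\ref{lem_sum_car_0_ou_card_G} applies verbatim.

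Finally I would split into the two cases according to whether $\chi_1\chi_2^{-1}$ equals the trivial character $1$. Since $\widehat{G}$ is a group, the equality $\chi_1\chi_2^{-1}=1$ holds exactly when $\chi_1=\chi_2$. In that case the lemma yields the value $|G|\bmod p$; otherwise $\chi_1\chi_2^{-1}\neq 1$ and the lemma yields $0$. This gives exactly the claimed dichotomy.

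As for difficulty, there is no substantial obstacle: the entire combinatorial heart of the statement has been isolated in the preceding lemma, and the proposition is essentially a repackaging of it. The only points deserving a moment of care are the identity $\overline{\chi_2(x)}=\chi_2(x)^{-1}$, which rests on the characters taking values in $\mathcal{S}(\textup{GF}(q))$ together with the conjugate being multiplicative, and the equivalence $\chi_1\chi_2^{-1}=1 \iff \chi_1=\chi_2$, which is immediate from the group structure of $\widehat{G}$.
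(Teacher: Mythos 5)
Your proof is correct and follows exactly the same route as the paper's: both reduce $\langle \chi_1,\chi_2\rangle$ to the sum of the single character $\chi_1\chi_2^{-1}=\chi_1\overline{\chi_2}$ over $G$ and then apply Lemma~\ref{lem_sum_car_0_ou_card_G} with the case split $\chi_1\chi_2^{-1}=1 \iff \chi_1=\chi_2$. The only difference is that you spell out the identity $\overline{\chi_2(x)}=\chi_2(x)^{-1}$, which the paper had already recorded when introducing characters.
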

\begin{proof}
Let us denote $\chi = \chi_1 \chi_2^{-1} = \chi_1 \overline{\chi_2}$. We have:
\begin{equation}
\langle \chi_1,\chi_2 \rangle = \sum_{x \in G}\chi(x).
\end{equation}
If $\chi_1 = \chi_2$, then $\chi = 1$ and if $\chi_1 \not = \chi_2$, then $\chi \not = 1$. The proof is obtained by using the previous lemma~\ref{lem_sum_car_0_ou_card_G}.  
\end{proof}
\begin{remark}
The term {\it orthogonality} would be abusive if $|G| \bmod p = 0$, because then  $\displaystyle \sum_{x \in G}\chi(x) = 0$ for all $\chi \in \widehat{G}$. Nevertheless we know from the assumption~\ref{division-of-the-order} that all the $d_i$'s divide $\sqrt{q}+1=p^n+1$. In particular, $d_i=1\bmod p$ and therefore $|G|=\prod_i d_i^{m_i}$ is co-prime to $p$, and the above situation cannot occur, so $|G|$ is invertible modulo $p$. 
\end{remark}

\section{Fourier transform over a finite field}\label{sec5}

In this section is developed a Fourier transform for functions defined on $G$ and based on the theory of characters introduced in section~\ref{sec4}. There is already a Fourier transform with values in some finite field called {\it Mattson-Solomon transform}~\cite{Wolfmann} but it maps a function $f \in \textsc{GF}(q)^{\mathbb{Z}_d}$ to a function $g \in \textsc{GF}(q^m)^{\mathbb{Z}_d}$ where $m$ is the smallest positive integer so that $d$ divides $q^m - 1$. In this paper we want our Fourier transform to ``live'' in a finite field $\textsc{GF}(q)$ and not in one of its extensions. Moreover the existing transform is not based on an explicit Hermitian structure nor on a theory of characters.  For these reasons, we need to introduce a new kind of Fourier transform.

Let $u$ be a generator of $\mathcal{S}(\textup{GF}(q))$. Let $G$ be a finite  Abelian group and $f : G \rightarrow \textup{GF}(q)$. We define the following function.
\begin{equation}\label{ff-fourier}
\begin{array}{l l l l}
\widehat{f}\colon& \widehat{G}& \longrightarrow & \textup{GF}(q)\\
& \chi & \mapsto & \displaystyle \sum_{x \in G} f(x)\chi(x)\ .
\end{array}
\end{equation}
\begin{remark}
We warm the reader that we use the same notation $\widehat{f}$ as the one used for the Fourier transform of a complex-valued function.  From now on only the second definition is used.
\end{remark}
Because $G = \displaystyle \prod_{i=1}^{N}\mathbb{Z}_{d_i}^{m_i}$, by using the isomorphism between $G$ and its dual group from section~\ref{sec4}, we can define
\begin{equation}
\begin{array}{l l l l}
\widehat{f}:& G& \longrightarrow & \textup{GF}(q)\\
& \alpha & \mapsto & \displaystyle \displaystyle \sum_{x \in G} f(x)\chi_{\alpha}(x) = \sum_{x \in G} f(x)\displaystyle \prod_{i=1}^N u^{\frac{(\sqrt{q} + 1)\alpha_i\cdot x_i}{d_i}}
\end{array}
\end{equation} 
Let us compute $\widehat{\widehat{f}}$. Let $\alpha \in G$. We have
\begin{equation}\label{double_FT}
\begin{array}{l l l}
\widehat{\widehat{f}}(\alpha) &=& \displaystyle \sum_{x \in G}\widehat{f}(x)\chi_{\alpha}(x)\\
&=& \displaystyle \sum_{x \in G} \sum_{y \in G}f(y)\chi_x(y)\chi_{\alpha}(x)\\
&=&\displaystyle \sum_{x \in G} \sum_{y \in G}f(y)\chi_y(x)\chi_{\alpha}(x)\\
&=&\displaystyle \sum_{y \in G}f(y) \sum_{x \in G}\chi_{\alpha + y}(x)\\
&=&(|G|\bmod p) f(-\alpha)
\end{array}
\end{equation}
The last equality holds since $$\displaystyle \sum_{x \in G}\chi_{\alpha + y}(x) = \left \{
\begin{array}{l l}
0 & \mbox{if}\ y \not = -\alpha\ ,\\
(|G|\bmod p) & \mbox{if}\ y = -\alpha\ .
\end{array} \right .$$
Now if we  assume that $(|G|\bmod p)=0$, then it follows that the function $f\mapsto \widehat{f}$ is non invertible but this situation cannot occur since from the assumption~\ref{division-of-the-order}, $|G|$ is invertible modulo $p$. 
Therefore we can claim that the function $\widehat{(\cdot)}$ that maps $f \in \textup{GF}(q)^{G}$ to $\widehat{f} \in \textup{GF}(q)^{G}$ is invertible. It is referred to as the {\it Fourier transform} of $f$ (with respect to $\textup{GF}(q)$) and it admits an {\it inversion formula}: for $f \in \textup{GF}(q)^{G}$ and for each $x \in G$,
\begin{equation}
f(x) = \displaystyle (|G| \bmod p)^{-1} \sum_{\alpha \in G}\widehat{f}(\alpha)\overline{\chi_{\alpha}(x)}
\end{equation}
where $(|G| \bmod p)^{-1}$ is the multiplicative inverse of $(|G| \bmod p)$ in $\mathbb{Z}_p$ (this inverse exists according to the choice of $G$). This Fourier transform shares many properties with the classical discrete Fourier transform.
\begin{definition}
Let $G$ be a finite  Abelian group. Let $f,g \in \textup{GF}(q)^G$. For each $\alpha \in G$, we define the {\it convolutional product} of $f$ and $g$ at $\alpha$ by
\begin{equation}
(f*g)(\alpha) = \displaystyle \sum_{x \in G}f(x)g(-x + \alpha)\ . 
\end{equation}
\end{definition}
\begin{proposition}[Trivialization of the convolutional product] Let $f,g \in \textup{GF}(q)^{G}$. For each $\alpha \in G$, 
\begin{equation}
\widehat{(f*g)}(\alpha) = \widehat{f}(\alpha)\widehat{g}(\alpha)\ .
\end{equation}
\end{proposition}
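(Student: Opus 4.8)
The plan is to reduce the statement to a double-sum manipulation that exploits the single structural fact about characters I have available, namely that each $\chi_{\alpha}$ is a group homomorphism, hence multiplicative: $\chi_{\alpha}(x+y) = \chi_{\alpha}(x)\chi_{\alpha}(y)$. Since the Fourier transform here is defined by the unconjugated kernel $\widehat{f}(\alpha) = \sum_{x\in G}f(x)\chi_{\alpha}(x)$, no conjugation will intervene and the computation stays entirely inside the field $\textup{GF}(q)$, where addition and multiplication are commutative and distributive. So the whole argument is a formal rearrangement, valid because $G$ is finite and all sums are finite.

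Concretely, first I would unfold the two definitions, writing
\begin{equation}
\widehat{(f*g)}(\alpha) = \sum_{z \in G}(f*g)(z)\chi_{\alpha}(z) = \sum_{z \in G}\sum_{x \in G}f(x)g(-x+z)\chi_{\alpha}(z)\ .
\end{equation}
The second step is the change of summation variable $y = -x + z$, i.e. $z = x+y$. For each fixed $x$, the translation $z \mapsto -x+z$ is a bijection of the finite group $G$ onto itself, so summing over $z \in G$ is the same as summing over $y \in G$; this is the only ``move'' in the proof and it is where one must be slightly careful, though it is routine. After substitution the summand becomes $f(x)g(y)\chi_{\alpha}(x+y)$.

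Next I would invoke multiplicativity of the character, replacing $\chi_{\alpha}(x+y)$ by $\chi_{\alpha}(x)\chi_{\alpha}(y)$, and then separate the variables using distributivity in $\textup{GF}(q)$:
\begin{equation}
\sum_{x \in G}\sum_{y \in G}f(x)g(y)\chi_{\alpha}(x)\chi_{\alpha}(y) = \left(\sum_{x \in G}f(x)\chi_{\alpha}(x)\right)\left(\sum_{y \in G}g(y)\chi_{\alpha}(y)\right)\ .
\end{equation}
Recognizing the two factors as $\widehat{f}(\alpha)$ and $\widehat{g}(\alpha)$ respectively finishes the proof.

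There is, honestly, no real obstacle here: the identity is a direct transcription of the classical complex-valued computation, and nothing about the finite-field Hermitian setting interferes because the defining kernel $\chi_{\alpha}$ is multiplicative and we never take norms or conjugates. The only point worth stating explicitly is the legitimacy of the reindexing $z \mapsto x+y$, which rests solely on $G$ being a finite (abelian) group so that translation is a bijection; once that is granted, the factorization of the double sum is immediate from the ring axioms of $\textup{GF}(q)$.
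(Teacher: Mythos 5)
Your proof is correct and follows essentially the same route as the paper's: expand both definitions, use multiplicativity of the character $\chi_{\alpha}(x+y)=\chi_{\alpha}(x)\chi_{\alpha}(y)$, and factor the double sum via the translation reindexing. The only cosmetic difference is ordering — you reindex first and then split the character, while the paper writes $\chi_{\alpha}(x)=\chi_{\alpha}(y)\chi_{\alpha}(-y+x)$ and leaves the reindexing implicit in the final factorization; yours is if anything slightly more explicit about the one step that needs justification.
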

\begin{proof}
Let $\alpha \in G$. We have
\begin{equation}
\begin{array}{l l l}
\widehat{(f*g)}(\alpha) &=& \displaystyle \sum_{x \in G}(f*g)(x)\chi_{\alpha}(x)\\
&=& \displaystyle \sum_{x \in G}\sum_{y \in  G}f(y)g(-y+x)\chi_{\alpha}(x)\\
&=& \displaystyle \sum_{x \in G}\sum_{y \in G}f(y)g(-y+x)\chi_{\alpha}(y-y+x)\\
&=& \displaystyle \sum_{x \in G}\sum_{y \in G}f(y)g(-y+x)\chi_{\alpha}(y)\chi_{\alpha}(-y+x)\\
&=& \widehat{f}(\alpha)\widehat{g}(\alpha)\ .
\end{array}
\end{equation}
\end{proof}
\noindent We recall that the group-algebra $\textup{GF}(q)[G]$ of $G$ over $\textup{GF}(q)$ is the $\textup{GF}(q)$-vector space $\textup{GF}(q)^G$ with point-wise addition, and with the convolution product.  We observe that the Fourier transform $\widehat{(\cdot)}$ is an algebra isomorphism from the group-algebra  $\textup{GF}(q)[G]$ of $G$  its usual convolution product to $\textup{GF}(q)[G]$ with the point-wise product. Moreover, let $(\delta_x)_{x\in G}$ be the canonical basis of $\textup{GF}(q)^G$ (as a $\textup{GF}(q)$-vector space), that is, $\delta_x(y)=0$ if $x\not=y$ and $\delta_x(x)=1$. It is easy to see (using a fixed isomorphism between $G$ and $\widehat{G}$) that $\widehat{\delta}_x=\chi_x$. Because $\widehat{(\cdot)}$ is an isomorphism, this means that $(\chi_x)_{x\in G}$ is a basis of $\textup{GF}(q)^G$ over $\textup{GF}(q)$, and it turns that the Fourier transform $\widehat{f}$ of $f\in \textup{GF}(q)^G$ is the decomposition of $f$ into the basis of characters (we recall here that the fact for a family of elements of $\textup{GF}(q)^G$ to be orthogonal with respect to the inner-product $\langle\cdot,\cdot\rangle$ of $\textup{GF}(q)^G$ does not ensure that the family into consideration is linearly independent because $\langle\cdot,\cdot\rangle$ is not positive-definite). 

We continue to enunciate formulas obtained by considering the decomposition into the basis of characters. 
\begin{proposition}[Plancherel formula] Let  $f,g \in \textup{GF}(q)^{G}$. Then,
\begin{equation}
\displaystyle \sum_{x \in G}f(x)\overline{g(x)} = (|G| \bmod p)^{-1} \sum_{\alpha \in G}\widehat{f}(\alpha)\overline{\widehat{g}(\alpha)}\ .
\end{equation}
\end{proposition}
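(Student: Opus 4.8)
The plan is to prove the formula by expanding the right-hand side directly and collapsing the resulting multiple sum using the orthogonality Lemma~\ref{lem_sum_car_0_ou_card_G}, rather than attempting to polarize from a Parseval-type identity, which would be problematic in characteristic $2$ anyway. Throughout, the only genuinely field-theoretic point is the invertibility of $(|G|\bmod p)$, which is already secured.

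First I would substitute the definitions $\widehat{f}(\alpha)=\sum_{x\in G}f(x)\chi_\alpha(x)$ and $\widehat{g}(\alpha)=\sum_{y\in G}g(y)\chi_\alpha(y)$ into the sum $\sum_{\alpha\in G}\widehat{f}(\alpha)\overline{\widehat{g}(\alpha)}$. Since each $\chi_\alpha$ is $\mathcal{S}(\textup{GF}(q))$-valued, one has $\overline{\chi_\alpha(y)}=\chi_\alpha(y)^{-1}=\chi_\alpha(-y)$, and because $\chi_\alpha$ is a group homomorphism this gives $\chi_\alpha(x)\overline{\chi_\alpha(y)}=\chi_\alpha(x-y)$. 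After interchanging the order of summation, the right-hand side becomes $\sum_{x,y\in G}f(x)\overline{g(y)}\sum_{\alpha\in G}\chi_\alpha(x-y)$.

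The key step is to evaluate the inner sum $\sum_{\alpha\in G}\chi_\alpha(x-y)$ for fixed $x,y$. Here I would invoke the symmetry relation $\chi_\alpha(z)=\chi_z(\alpha)$ recorded just after~(\ref{specific-isomorphism}) to rewrite it as $\sum_{\alpha\in G}\chi_{x-y}(\alpha)$, that is, the sum of a \emph{single fixed} character over the whole group. Lemma~\ref{lem_sum_car_0_ou_card_G} then yields $(|G|\bmod p)$ when $\chi_{x-y}=1$ and $0$ otherwise; and since $\alpha\mapsto\chi_\alpha$ is an isomorphism with $\chi_{0_G}=1$, the condition $\chi_{x-y}=1$ holds exactly when $x=y$. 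Consequently only the diagonal terms survive, and the double sum reduces to $(|G|\bmod p)\sum_{x\in G}f(x)\overline{g(x)}$.

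Finally I would multiply both sides by $(|G|\bmod p)^{-1}$ to isolate $\sum_{x\in G}f(x)\overline{g(x)}$. This last step is the only one requiring care: the inverse exists precisely because $|G|$ is coprime to $p$, guaranteed by Assumption~\ref{division-of-the-order} (each $d_i$ divides $\sqrt{q}+1=p^n+1$, hence $d_i\equiv 1\bmod p$), as already noted in the remark following the orthogonality relation. I do not expect any real obstacle: the main thing to watch is the bookkeeping of conjugates and the conversion of $\sum_\alpha\chi_\alpha(x-y)$ into a sum over a single character so that the lemma applies, the formula being a purely formal consequence of orthogonality.
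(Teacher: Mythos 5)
Your proof is correct, but it takes a genuinely different route from the paper's. You expand $\sum_{\alpha\in G}\widehat{f}(\alpha)\overline{\widehat{g}(\alpha)}$ directly, use $\chi_\alpha(x)\overline{\chi_\alpha(y)}=\chi_\alpha(x-y)$, swap the order of summation, and then convert the inner sum into a sum of the single character $\chi_{x-y}$ over $G$ via the symmetry $\chi_\alpha(z)=\chi_z(\alpha)$ noted after formula~(\ref{specific-isomorphism}), so that Lemma~\ref{lem_sum_car_0_ou_card_G} collapses everything to the diagonal $x=y$; each of these steps is justified in the paper (conjugation is the additive field homomorphism $\textup{Frob}_n$, characters are $\mathcal{S}(\textup{GF}(q))$-valued, and $\alpha\mapsto\chi_\alpha$ is injective so $\chi_{x-y}=1$ iff $x=y$). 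The paper instead argues structurally: it introduces $I_G(x)=-x$ and $\overline{h}$, observes that $\sum_{x\in G}f(x)\overline{g(x)}=(f*\overline{g}\circ I_G)(0_G)$, applies the inversion formula at $0_G$, invokes the trivialization of the convolutional product, and computes $\widehat{(\overline{g}\circ I_G)}(\alpha)=\overline{\widehat{g}(\alpha)}$. Your approach is more elementary and self-contained, relying only on the character-sum lemma and the symmetry relation rather than on the previously established inversion formula and trivialization proposition; the paper's approach buys a cleaner structural picture (Plancherel as a formal consequence of the Fourier transform being an algebra isomorphism plus inversion, exactly as in the classical complex case) and produces the independently useful identity $\widehat{(\overline{g}\circ I_G)}=\overline{\widehat{g}}$ along the way. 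Both hinge on the same essential fact, the invertibility of $|G|\bmod p$ guaranteed by Assumption~\ref{division-of-the-order}, which you correctly flag as the only field-theoretic point.
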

\begin{proof} Let us define the following functions for any finite group $G$ and any function $h : G \rightarrow \textup{GF}(q)$,
\begin{equation}
\begin{array}{l l l l}
I_G \colon& G & \rightarrow & G\\
& x & \mapsto & -x\\
&\mbox{and}&\\
\overline{h} \colon& G & \rightarrow & \textup{GF}(q)\\
& x & \mapsto & \overline{h(x)}\ .
\end{array}
\end{equation}
Then we have $(f * \overline{g}\circ I_{G})(0_G) = \displaystyle \sum_{x \in G} f(x)\overline{g(x)}$. But from the inversion formula we have also
\begin{equation}
\begin{array}{l l l}
(f*\overline{g}\circ I_{G})(0_G) &=& \displaystyle (|G| \bmod p)^{-1} \sum_{\alpha \in G}\widehat{(f * \overline{g}\circ I_{G})}(\alpha)\\
&=& \displaystyle (|G| \bmod p)^{-1} \sum_{\alpha \in G}\widehat{f}(\alpha)\widehat{(\overline{g} \circ I_{G})}(\alpha)\\
&&\mbox{(by the trivialization of the convolutional product.)}
\end{array}
\end{equation}
Let us compute $\widehat{(\overline{g} \circ I_{G})}(\alpha)$ for $\alpha \in G$.
\begin{equation}
\begin{array}{l l l}
\widehat{(\overline{g} \circ I_{G})}(\alpha)&=&\displaystyle \sum_{x \in G}(\overline{g} \circ I_{G})(x)\chi_{\alpha}(x)\\
&=&\displaystyle \sum_{x \in G}\overline{g(-x)}\chi_{\alpha}(x)\\
&=&\displaystyle \sum_{x \in G}\overline{g(x)}\chi_{\alpha}(-x)\\
&=&\displaystyle \sum_{x \in G}\overline{g(x)}(\chi_{\alpha}(x))^{-1}\\
&=&\displaystyle \sum_{x \in G}\overline{g(x)}\overline{\chi_{\alpha}(x)}\\
&=& \displaystyle \overline{\sum_{x \in G}g(x)\chi_{\alpha}(x)}\\
&=&\overline{\widehat{g}(\alpha)}\ .
\end{array}
\end{equation}
Then we obtain the equality
\begin{equation}
(f * \overline{g}\circ I_{G}) = \displaystyle (|G| \bmod p)^{-1} \sum_{\alpha \in G}\widehat{f}(\alpha)\overline{\widehat{g}(\alpha)}
\end{equation}
that ensures the correct result. 
\end{proof}
\begin{corollary}[Parseval equation] Let  $f,g \in \textup{GF}(q)^{G}$. Then
\begin{equation}
\displaystyle \sum_{x \in G}\textup{norm}(f(x)) = (|G| \bmod p)^{-1}\sum_{\alpha \in G} \textup{norm}(\widehat{f}(\alpha))\ .
\end{equation}
In particular, if $f$ is $\mathcal{S}(\textup{GF}(q))$-valued, then
\begin{equation}
\displaystyle \sum_{\alpha \in G} \textup{norm}(\widehat{f}(\alpha)) = (|G| \bmod p)^2\ .
\end{equation}
\end{corollary}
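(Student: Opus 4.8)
The plan is to derive both equations directly from the Plancherel formula just established, since the Parseval equation is merely its diagonal specialization. First I would apply the Plancherel formula with $g = f$, which yields
\[
\sum_{x \in G} f(x)\overline{f(x)} = (|G| \bmod p)^{-1} \sum_{\alpha \in G} \widehat{f}(\alpha)\overline{\widehat{f}(\alpha)}.
\]
The only observation needed is that the relative norm was defined precisely by $\textup{norm}(z) = z\overline{z}$, so each summand $f(x)\overline{f(x)}$ equals $\textup{norm}(f(x))$ and each $\widehat{f}(\alpha)\overline{\widehat{f}(\alpha)}$ equals $\textup{norm}(\widehat{f}(\alpha))$. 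Substituting these identifications gives the first displayed equation of the corollary immediately.

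For the second assertion I would specialize to an $\mathcal{S}(\textup{GF}(q))$-valued function $f$. By the very definition of the unit circle, every value $f(x)$ satisfies $f(x)\overline{f(x)} = 1$, that is $\textup{norm}(f(x)) = 1$. Hence the left-hand sum collapses to $\sum_{x \in G} 1 = (|G| \bmod p)$, using that $\textup{GF}(q)$ has characteristic $p$ (exactly as in the first case of Lemma~\ref{lem_sum_car_0_ou_card_G}). Plugging this into the first equation and clearing the factor $(|G| \bmod p)^{-1}$ produces $\sum_{\alpha \in G}\textup{norm}(\widehat{f}(\alpha)) = (|G| \bmod p)^2$. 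Clearing that factor is legitimate because the standing assumption that every $d_i$ divides $\sqrt{q}+1 = p^n+1$ forces $|G|$ to be coprime to $p$, so $(|G| \bmod p)$ is invertible in $\mathbb{Z}_p$.

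There is essentially no hard step here: the content of the corollary is entirely carried by the Plancherel formula, and the remaining work is the bookkeeping identification $z\overline{z} = \textup{norm}(z)$ together with the evaluation of the constant sum. The only point requiring a moment of care is to confirm that the scalar norm $\textup{norm}$, rather than the vector-valued $\textup{norm}_l$, is the relevant one, which holds because both $f$ and $\widehat{f}$ take values in $\textup{GF}(q)$ itself; and to keep track of the invertibility of $(|G| \bmod p)$ so that clearing it in the final step is justified.
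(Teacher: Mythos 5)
Your proof is correct and follows exactly the paper's route: the paper's own proof consists of the single line ``apply the Plancherel formula with $g=f$,'' and your additional bookkeeping (the identification $z\overline{z}=\textup{norm}(z)$, the evaluation $\sum_{x\in G}1 = |G|\bmod p$ in characteristic $p$, and the invertibility of $|G|\bmod p$ from the standing assumption) is precisely the implicit content of that line, spelled out.
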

\begin{proof}
It is sufficient to apply Plancherel formula with $g=f$.   
\end{proof}

\section{Bent functions over a finite field}
\label{sec6}
Up to now, the following ingredients have been introduced: an Hermitian-like structure on degree two extensions, a finite-field character theory for finite Abelian groups (of order co-prime to the characteristic), and a corresponding Fourier transform. All of them may be constituents of a bentness-like notion in this particular setting. As already explained in introduction, although we are aware of an existing notion of bent functions in finite fields~\cite{Ambrosimov}, in this contribution we do not make any interesting connections with these maps and those introduced below, except that they share very similar properties. Nevertheless, we compare the notion of bentess due to Logachev, Salnikov and Yashchenko in~\cite{LSY} to our, and we prove that many bent functions as defined in~\cite{LSY} are also bent functions in our setting.  The notion of bentness introduced now serves also as an illustration of our character theory. In this section we also prove the existence of functions which are bent in a sense presented hereafter.

In the traditional setting, {\it i.e.}, for complex-valued functions defined on any finite  Abelian group $G$, bent functions \cite{Carlet,Dill,LSY,Nyberg,Roth} are those maps $f \colon G \rightarrow \mathcal{S}(\mathbb{C})$ such that for each $\alpha \in G$,
\begin{equation}
|\widehat{f}(\alpha)|^2 = |G|\ .
\end{equation}
This notion is closely related to some famous cryptanalysis namely the differential~\cite{Bih} and linear~\cite{Mat} attacks on secret-key cryptosystems. We translate this concept in the current finite-field setting as follows.
\begin{definition}\label{new-bent}
The map $f \colon G \rightarrow \mathcal{S}(\textup{GF}(q))$ is called bent if for all $\alpha \in G$,
\begin{equation}
\textup{norm}(\widehat{f}(\alpha)) = (|G| \bmod p)\ .
\end{equation}
\end{definition} 
\subsection{Derivative and bentness} 
In the traditional approach the relation with bentness and  differential attack is due to the following result.
\begin{proposition}\cite{LSY}
Let $f \colon  G \rightarrow \mathcal{S}(\mathbb{C})$. The function $f$ is bent if, and only if, for all $\alpha \in G^*$,
\begin{equation}
\displaystyle \sum_{x \in G}f(\alpha + x)\overline{f(x)} = 0\ .
\end{equation}
\end{proposition}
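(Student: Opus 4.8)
The plan is to recognize the sum $\sum_{x\in G} f(\alpha+x)\overline{f(x)}$ as an \emph{autocorrelation} of $f$, to identify its Fourier transform with the squared magnitude $|\widehat{f}|^2$, and then to read off the equivalence from the bijectivity of the Fourier transform. Concretely, I would set $C_f(\alpha)=\sum_{x\in G} f(\alpha+x)\overline{f(x)}$ and, borrowing the notation of the classical section~\ref{sec2}, write $I_G(x)=-x$ and $\overline{f}(x)=\overline{f(x)}$.

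First I would show that $C_f$ is a convolution. Expanding $(f*(\overline{f}\circ I_G))(\alpha)=\sum_{x\in G} f(x)\,\overline{f(-(-x+\alpha))}=\sum_{x\in G} f(x)\overline{f(x-\alpha)}$ and reindexing by $y=x-\alpha$ gives $f*(\overline{f}\circ I_G)=C_f$. Applying the trivialization of the convolutional product, together with the identity $\widehat{\overline{f}\circ I_G}(\alpha)=\overline{\widehat{f}(\alpha)}$ (obtained by the substitution $x\mapsto -x$ and the relation $\chi_\alpha(-x)=\overline{\chi_\alpha(x)}$, exactly as in the Plancherel computation), I get $\widehat{C_f}(\alpha)=\widehat{f}(\alpha)\,\overline{\widehat{f}(\alpha)}=|\widehat{f}(\alpha)|^2$ for every $\alpha\in G$.

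With this identity the equivalence follows cleanly in both directions at once. By definition $f$ is bent exactly when $|\widehat{f}(\alpha)|^2=|G|$ for all $\alpha$, that is, when $\widehat{C_f}$ is the constant function with value $|G|$. Writing $\delta_{0_G}$ for the indicator of $\{0_G\}$, a one-line computation gives $\widehat{\delta_{0_G}}(\alpha)=\chi_\alpha(0_G)=1$, so the constant function $|G|$ is precisely $\widehat{|G|\,\delta_{0_G}}$. Since the Fourier transform is an algebra isomorphism, hence injective, bentness is therefore equivalent to $C_f=|G|\,\delta_{0_G}$. Finally, because $f$ is $\mathcal{S}(\mathbb{C})$-valued one has $C_f(0_G)=\sum_{x\in G}|f(x)|^2=|G|$ automatically, so $C_f=|G|\,\delta_{0_G}$ holds if and only if $C_f(\alpha)=0$ for every $\alpha\in G^*$, which is exactly the asserted condition.

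I do not expect a genuine obstacle here: once the autocorrelation is introduced the argument is essentially bookkeeping. The two points deserving care are the sign-and-conjugate manipulations in the steps $C_f=f*(\overline{f}\circ I_G)$ and $\widehat{\overline{f}\circ I_G}=\overline{\widehat{f}}$, and, more conceptually, the single essential use of the hypothesis that $f$ takes values in $\mathcal{S}(\mathbb{C})$: it is precisely what forces $C_f(0_G)=|G|$, so that vanishing of the autocorrelation on $G^*$ alone (rather than on all of $G$) already pins $C_f$ down to the multiple $|G|\,\delta_{0_G}$ and thus yields bentness.
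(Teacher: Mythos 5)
Your proof is correct and is essentially the approach the paper itself embodies: although the paper cites~\cite{LSY} for this classical statement without reproving it, its proof of the finite-field analogue (theorem~\ref{theorem_derivative_balanced}, via lemmas~\ref{lem_f_0_f_chap_const} and~\ref{autocor}) follows exactly your scheme --- introduce the autocorrelation, identify its Fourier transform with $|\widehat{f}|^2$, and use unimodularity of $f$ to pin the value at $0_G$ so that vanishing on $G^*$ alone is equivalent to bentness. Your only cosmetic deviations are obtaining $\widehat{C_f}=|\widehat{f}|^2$ from the trivialization of convolution rather than by a direct double-sum computation, and invoking injectivity of the Fourier transform together with $\widehat{\delta_{0_G}}\equiv 1$ where the paper instead applies its lemma~\ref{lem_f_0_f_chap_const}.
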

Similarly, it is possible to characterize the new concept of bentness in a similar way. Let $f \colon G \rightarrow \textup{GF}(q)$. For each $\alpha \in G$, we define the {\it derivative of} $f$ {\it in direction} $\alpha$ as
\begin{equation}
\begin{array}{l l l l}
d_{\alpha}f \colon & G & \rightarrow & \textup{GF}(q)\\
& x & \mapsto & f(\alpha + x)\overline{f(x)}\ .
\end{array}
\end{equation} 
\begin{lemma}\label{lem_f_0_f_chap_const}
Let $f \colon G \rightarrow \textup{GF}(q)$. We have
\begin{enumerate}
\item $\forall x \in G^*$, $f(x) = 0$ $\Leftrightarrow$ $\forall \alpha \in G$, $\widehat{f}(\alpha) = f(0_G)$.
\item $\forall \alpha \in G^*$, $\widehat{f}(\alpha) = 0$ $\Leftrightarrow$ $f$ is constant.
\end{enumerate}
\end{lemma}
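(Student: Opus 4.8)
The plan is to prove each equivalence by splitting it into an easy implication, handled by direct evaluation of the Fourier sum together with the character-sum Lemma~\ref{lem_sum_car_0_ou_card_G}, and a converse implication, handled by the double Fourier transform identity $\widehat{\widehat{f}}(\alpha)=(|G|\bmod p)f(-\alpha)$ from~\eqref{double_FT}. The single structural fact underlying every converse is that $(|G|\bmod p)$ is invertible modulo $p$ (guaranteed by Assumption~\ref{division-of-the-order}), so that this identity genuinely recovers $f$ from $\widehat{\widehat{f}}$.

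For the forward direction of item~(1), I would assume $f(x)=0$ for all $x\in G^*$; then every term but $x=0_G$ drops out of $\widehat{f}(\alpha)=\sum_{x\in G}f(x)\chi_{\alpha}(x)$, and since $\chi_{\alpha}(0_G)=1$ this leaves $\widehat{f}(\alpha)=f(0_G)$ for every $\alpha$. For the converse, I would assume $\widehat{f}$ equals the constant $c:=f(0_G)$ and compute $\widehat{\widehat{f}}(\alpha)=c\sum_{x\in G}\chi_{\alpha}(x)$, which by Lemma~\ref{lem_sum_car_0_ou_card_G} equals $0$ for $\alpha\neq 0_G$ and $c(|G|\bmod p)$ for $\alpha=0_G$. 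Comparing this with $\widehat{\widehat{f}}(\alpha)=(|G|\bmod p)f(-\alpha)$ and cancelling the invertible scalar forces $f(-\alpha)=0$ for each $\alpha\in G^*$, i.e. $f$ vanishes on $G^*$.

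Item~(2) I would treat as the \emph{dual} of item~(1). The forward direction (if $f$ is constant then $\widehat{f}(\alpha)=0$ for $\alpha\neq 0_G$) is again a one-line orthogonality computation using Lemma~\ref{lem_sum_car_0_ou_card_G}. For the converse, I would observe that ``$\widehat{f}(\alpha)=0$ for all $\alpha\in G^*$'' is precisely the hypothesis of the converse of~(1) applied to the function $\widehat{f}$ in place of $f$; hence $\widehat{\widehat{f}}$ is constant, and since $\widehat{\widehat{f}}(\alpha)=(|G|\bmod p)f(-\alpha)$ with the factor invertible, $f$ must itself be constant.

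The point requiring care is not a deep obstacle but a temptation to avoid: one must not argue by orthogonality and linear independence through the form $\langle\cdot,\cdot\rangle$, because (as stressed just before the statement) this bilinear form is not positive-definite, so orthogonal families need not be independent. Every converse must therefore route through the algebraic inversion provided by the double Fourier transform and the invertibility of $|G|\bmod p$, rather than through any Hilbert-space-style reasoning.
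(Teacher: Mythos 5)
Your proof is correct and takes essentially the same route as the paper: both split each item into an easy direct evaluation plus a converse that runs through Fourier inversion, resting on exactly the same two ingredients (Lemma~\ref{lem_sum_car_0_ou_card_G} and the invertibility of $|G|\bmod p$), and your identity $\widehat{\widehat{f}}(\alpha)=(|G|\bmod p)f(-\alpha)$ is just the paper's inversion formula in equivalent form, while your derivation of the converse of (2) by applying (1) to $\widehat{f}$ merely repackages the paper's direct appeal to that formula. One small mislabel: ``$\widehat{f}=0$ on $G^*$'' is the hypothesis of the \emph{forward} implication of (1) applied to $\widehat{f}$, not of its converse, but that is precisely the direction you need and you did establish it, so the argument is unaffected.
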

\begin{proof}
\begin{enumerate}
\item \quad
\begin{itemize}
\item [$\Rightarrow)$] $\widehat{f}(\alpha) = \displaystyle \sum_{x \in G}f(x)\chi_{\alpha}(x) = f(0_G)\chi_{\alpha}(0_G) = f(0_G)$,
\item [$\Leftarrow)$] According to the inversion formula, 
\begin{equation}
\begin{array}{l l l}
f(x) &=& \displaystyle (|G| \bmod p)^{-1}\sum_{\alpha \in G}\widehat{f}(\alpha)\overline{\chi_{\alpha}(x)}\\
&=& \displaystyle f(0_G)(|G| \bmod p)^{-1}\sum_{\alpha \in G}\chi_{-x}(\alpha)\\
&=& 0\  \mbox{for all}\ x \not = 0_G\ .
\end{array}
\end{equation}
\end{itemize}
\item \quad
\begin{itemize}
\item [$\Rightarrow)$] $f(x) = \displaystyle (|G| \bmod p)^{-1}\sum_{\alpha \in G}\widehat{f}(\alpha)\overline{\chi_{\alpha}(x)} = \widehat{f}(0_G)(|G| \bmod p)^{-1}$,
\item [$\Leftarrow)$] $\widehat{f}(\alpha) = \displaystyle \sum_{x \in G}f(x)\chi_{\alpha}(x) = \mathit{constant}\sum_{x \in G}\chi_{\alpha}(x) =0$ for all $\alpha \not = 0_G$. 
\end{itemize}
\end{enumerate}
\end{proof}
\begin{lemma}\label{autocor}
Let $f \colon G \rightarrow \textup{GF}(q)$. We define the autocorrelation function of $f$ as
\begin{equation}
\begin{array}{l l l l}
\mathit{AC}_f \colon & G & \rightarrow & \textup{GF}(q)\\
& \alpha & \mapsto & \displaystyle \sum_{x \in G}d_{\alpha}f(x)\ .
\end{array}
\end{equation}
Then, for all $\alpha \in G$, $\widehat{\mathit{AC}_f}(\alpha) = \textup{norm}(\widehat{f}(\alpha))$.
\end{lemma}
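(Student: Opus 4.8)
The plan is to recognize the autocorrelation function as a convolution and then invoke the already-proved trivialization of the convolutional product. Recall the notation $I_G(x)=-x$ and $\overline{h}(x)=\overline{h(x)}$ introduced in the proof of the Plancherel formula. First I would claim that $\mathit{AC}_f = f * (\overline{f}\circ I_G)$. To verify this, I would unwind the definition of the convolution at an arbitrary $\alpha\in G$:
\begin{equation}
\bigl(f * (\overline{f}\circ I_G)\bigr)(\alpha) = \sum_{x\in G} f(x)\,(\overline{f}\circ I_G)(-x+\alpha) = \sum_{x\in G} f(x)\,\overline{f(x-\alpha)},
\end{equation}
and then re-index by $x\mapsto x+\alpha$ (a bijection of $G$) to obtain $\sum_{x\in G} f(x+\alpha)\overline{f(x)} = \mathit{AC}_f(\alpha)$, as desired.

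With this identification in hand, the rest is immediate. Applying the trivialization of the convolutional product gives $\widehat{\mathit{AC}_f}(\alpha) = \widehat{f}(\alpha)\,\widehat{(\overline{f}\circ I_G)}(\alpha)$. But the intermediate computation carried out in the proof of the Plancherel formula already established, for every $g\in\textup{GF}(q)^G$, that $\widehat{(\overline{g}\circ I_G)}(\alpha)=\overline{\widehat{g}(\alpha)}$; taking $g=f$ yields $\widehat{(\overline{f}\circ I_G)}(\alpha)=\overline{\widehat{f}(\alpha)}$. Substituting this back gives
\begin{equation}
\widehat{\mathit{AC}_f}(\alpha) = \widehat{f}(\alpha)\,\overline{\widehat{f}(\alpha)} = \textup{norm}(\widehat{f}(\alpha)),
\end{equation}
which is exactly the claimed equality for all $\alpha\in G$.

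I do not expect a genuine obstacle here; the only delicate point is the bookkeeping in matching $\mathit{AC}_f$ to the correct convolution, namely getting the sign inside $I_G$ and the re-indexing $x\mapsto x+\alpha$ to line up with the convention $(f*g)(\alpha)=\sum_{x\in G}f(x)g(-x+\alpha)$. As a safeguard, I would note that the statement can alternatively be proved by a direct expansion: writing $\widehat{\mathit{AC}_f}(\alpha)=\sum_{\beta\in G}\sum_{x\in G} f(\beta+x)\overline{f(x)}\,\chi_\alpha(\beta)$ and using $\chi_\alpha(\beta)=\chi_\alpha(\beta+x)\overline{\chi_\alpha(x)}$, the substitution $\gamma=\beta+x$ factorizes the double sum into $\bigl(\sum_{\gamma}f(\gamma)\chi_\alpha(\gamma)\bigr)\bigl(\sum_{x}\overline{f(x)}\,\overline{\chi_\alpha(x)}\bigr)=\widehat{f}(\alpha)\,\overline{\widehat{f}(\alpha)}$, recovering the same result without reference to the convolution machinery.
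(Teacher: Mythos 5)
Your proof is correct, but it takes a different route from the paper's. The paper proves this lemma by the direct expansion that you relegate to your ``safeguard'' paragraph: it writes $\widehat{\mathit{AC}_f}(\alpha)$ as the double sum $\sum_{x}\sum_{y} f(x+y)\overline{f(y)}\,\chi_{\alpha}(x+y)\overline{\chi_{\alpha}(y)}$ (using $\chi_{\alpha}(x)=\chi_{\alpha}(x+y)\overline{\chi_{\alpha}(y)}$) and factorizes it into $\widehat{f}(\alpha)\overline{\widehat{f}(\alpha)}$ --- exactly your alternative computation, so that part of your proposal reproduces the paper's argument. Your primary route instead identifies $\mathit{AC}_f = f*(\overline{f}\circ I_G)$ and then invokes two results already available: the trivialization of the convolutional product and the identity $\widehat{(\overline{g}\circ I_G)}(\alpha)=\overline{\widehat{g}(\alpha)}$ established inside the Plancherel proof. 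Both of these citations are legitimate (the trivialization is stated for arbitrary $f,g\in\textup{GF}(q)^G$, and the Plancherel intermediate step is proved for arbitrary $g$), and your convolution bookkeeping checks out: $\bigl(f*(\overline{f}\circ I_G)\bigr)(\alpha)=\sum_x f(x)\overline{f(x-\alpha)}$, which after the shift $x\mapsto x+\alpha$ is $\mathit{AC}_f(\alpha)$. What your route buys is conceptual economy and consistency: it exhibits the autocorrelation as literally the same convolution $f*(\overline{g}\circ I_G)$ that drives the Plancherel formula, so the lemma becomes a two-line corollary of machinery already in place rather than a fresh computation; what the paper's direct expansion buys is self-containedness, since it does not depend on the Plancherel proof's internals (and, as a minor bonus, your version avoids the notational slips in the paper's displayed computation, where the additive group operation is written multiplicatively as $f(xy)$).
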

\begin{proof}
Let $\alpha \in G$.
\begin{equation}
\begin{array}{l l l}
\widehat{\mathit{AC}_f}(\alpha) &=& \displaystyle \sum_{x \in G}\mathit{AC}_f (x) \chi_{\alpha}(x)\\
&=& \displaystyle \sum_{x \in G}\sum_{y \in G} d_{x}f(y)\chi_{\alpha}(x)\\
&=& \displaystyle \sum_{x \in G}\sum_{y \in G}f(xy)\overline{f(y)}\chi_{\alpha}(xy)\overline{\chi_{\alpha}(y)}\\
&=& \widehat{f}(\alpha)\overline{\widehat{f}(\alpha)}\\
&=&\textup{norm}(\widehat{f}(\alpha))\ . 
\end{array}
\end{equation}
\end{proof}
We use the above results to obtain the following characterization of bentness as a combinatorial object using the derivative.
\begin{theorem}\label{theorem_derivative_balanced}
The function $f : G \rightarrow \mathcal{S}(\textup{GF}(q))$ is bent if, and only if, for all $\alpha \in G^*$, $\displaystyle \sum_{x \in G}d_{\alpha}f(x) = 0$.
\end{theorem}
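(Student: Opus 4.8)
The plan is to recognize that this is an immediate synthesis of Lemma~\ref{autocor} and the first item of Lemma~\ref{lem_f_0_f_chap_const}, applied not to $f$ itself but to its autocorrelation function $g := \mathit{AC}_f$. The bridge between the two sides of the equivalence is the single identity $\widehat{\mathit{AC}_f}(\alpha) = \textup{norm}(\widehat{f}(\alpha))$ already proved in Lemma~\ref{autocor}, which turns the bentness condition (a statement about the magnitude of $\widehat{f}$) into a statement asserting that $\widehat{g}$ is a constant function.

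First I would record the value of $g$ at the identity. By definition $g(0_G) = \mathit{AC}_f(0_G) = \sum_{x \in G} d_{0_G}f(x) = \sum_{x \in G} f(x)\overline{f(x)} = \sum_{x \in G}\textup{norm}(f(x))$. Here the hypothesis that $f$ is $\mathcal{S}(\textup{GF}(q))$-valued is used in an essential way: each $\textup{norm}(f(x)) = 1$, so the sum collapses to $\sum_{x \in G} 1 = (|G|\bmod p)$, since $\textup{GF}(q)$ has characteristic $p$. Thus $g(0_G) = (|G|\bmod p)$, and this is precisely the constant appearing in Definition~\ref{new-bent}.

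Next I would invoke Lemma~\ref{lem_f_0_f_chap_const}(1) with the function $g$ in place of $f$. It states that $g(x) = 0$ for all $x \in G^*$ if and only if $\widehat{g}(\alpha) = g(0_G)$ for all $\alpha \in G$. The left-hand condition reads $\sum_{x \in G} d_{\alpha}f(x) = \mathit{AC}_f(\alpha) = 0$ for every $\alpha \in G^*$, which is exactly the combinatorial statement of the theorem. For the right-hand condition I substitute $\widehat{g}(\alpha) = \textup{norm}(\widehat{f}(\alpha))$ from Lemma~\ref{autocor} and $g(0_G) = (|G|\bmod p)$ from the previous step, obtaining $\textup{norm}(\widehat{f}(\alpha)) = (|G|\bmod p)$ for all $\alpha \in G$, which is the definition of $f$ being bent. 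Chaining the two equivalences yields the claim.

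I do not expect a genuine obstacle here, since the argument is a clean assembly of results already established; the only point requiring care is that the constant $g(0_G)$ produced by the autocorrelation at the origin matches \emph{exactly} the constant $(|G|\bmod p)$ demanded by the definition of bentness, and this matching is what forces the use of the $\mathcal{S}(\textup{GF}(q))$-valued hypothesis rather than an arbitrary $\textup{GF}(q)$-valued $f$. I would also note in passing that $(|G|\bmod p)$ is nonzero (it is invertible modulo $p$ by the standing assumption that each $d_i$ divides $\sqrt{q}+1$), so that the inversion formula underlying Lemma~\ref{lem_f_0_f_chap_const} is legitimately available throughout.
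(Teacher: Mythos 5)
Your proof is correct and is essentially identical to the paper's own argument: both chain Lemma~\ref{lem_f_0_f_chap_const}(1) applied to $\mathit{AC}_f$ with the identity $\widehat{\mathit{AC}_f}(\alpha)=\textup{norm}(\widehat{f}(\alpha))$ from Lemma~\ref{autocor}, and both use the $\mathcal{S}(\textup{GF}(q))$-valued hypothesis exactly where you did, to evaluate $\mathit{AC}_f(0_G)=(|G|\bmod p)$. Your remark that $(|G|\bmod p)$ is invertible, so the inversion formula behind Lemma~\ref{lem_f_0_f_chap_const} applies, is a point the paper leaves implicit.
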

\begin{proof}
$\forall \alpha \in G^*,\ \displaystyle \sum_{x \in G}d_{\alpha}f(x) = 0$\\
$\Leftrightarrow$ $\forall \alpha \in G^*,\ \mathit{AC}_f (\alpha) = 0$\\
$\Leftrightarrow$ $\forall \alpha \in G,\ \widehat{\mathit{AC}_f}(\alpha) = \mathit{AC}_f (0_G)$\\
(according to lemma \ref{lem_f_0_f_chap_const})\\
$\Leftrightarrow$ $\forall \alpha \in G,\ \textup{norm}(\widehat{f}(\alpha)) = \displaystyle \sum_{x \in G}f(x)\overline{f(x)}$\\
(according to lemma \ref{autocor})\\
$\Leftrightarrow$ $\forall \alpha \in G,\ \textup{norm}(\widehat{f}(\alpha)) = \displaystyle \sum_{x \in G}\textup{norm}(f(x))$\\
$\Leftrightarrow$ $\forall \alpha \in G,\ \textup{norm}(\widehat{f}(\alpha)) = (|G| \bmod p)$\\
(because $f$ is $\mathcal{S}(\textup{GF}(q))$-valued.) 
\end{proof}

\subsection{Comparison between the two bentness notions}\label{comparison}
In what follows we refer to the traditional bent functions, as introduced in the beginning of section~\ref{sec6}, as ``bent in the usual sense'', while our own bent functions (definition~\ref{new-bent}) are referred to as ``bent in the finite-field sense''. In this subsection we prove that any bent ``well-behaved'' function in the usual sense is also a bent function in the finite-field sense. 

Let $\mathbb{U}_m$ be the group of complex $m$-th roots of unity. Let us assume that $m$ that divides $\sqrt{q}+1$. Therefore, $\mathbb{U}_m$ may be identified with the (unique) sub-group of $\mathcal{S}(\textup{GF}(q))$ of order $m$. We also remark that for every $\omega\in \mathbb{U}_m$, the complex-conjugate $\overline{\omega}=\omega^{-1}$, and if the same $\omega$ is seen as an element of $\mathcal{S}(\textup{GF}(q))$, then also $\omega^{\sqrt{q}}=\omega^{-1}$. Conversely, any sub-group of $\mathcal{S}(\textup{GF}(q))$ may be identified with a sub-group of $\mathbb{U}_{\sqrt{q}+1}$. Let us assume that $G$ belongs to $\mathpzc{Ab}_{\sqrt{q}+1}$. Let us denote by $\widetilde{G}$ the group of complex-valued characters of $G$. We have $\widetilde{G}\cong G\cong \widehat{G}$. It is clear that any complex-valued character of $G$ takes its values in $\mathbb{U}_{\sqrt{q}+1}\cong \mathcal{S}(\textup{GF}(q))$. So that for every $x\in G$, and every $\chi\in \widetilde{G}\cong \widehat{G}$, $f(x)\chi(x)\in \mathbb{U}_{\sqrt{q}+1}\cong \mathcal{S}(\textup{GF}(q))$. Let $\mathbb{Z}[\mathbb{U}_{\sqrt{q}+1}]$ be the group-ring of $\mathbb{U}_{\sqrt{q}+1}$, and let $\overline{\mathbb{U}}_{\sqrt{q}+1}$ be the sub-ring of $\mathbb{C}$ generated by $\mathbb{U}_{\sqrt{q}+1}$. Let $\pi\colon \mathbb{Z}[\mathbb{U}_{\sqrt{q}+1}]\rightarrow \overline{\mathbb{U}}_{\sqrt{q}+1}$ be the unique ring homomorphism such that $\pi([\omega])=\omega$ for all $\omega\in \mathbb{U}_{\sqrt{q}+1}$ (where $[\cdot]\colon \mathbb{U}_{\sqrt{q}+1}\rightarrow \mathbb{Z}[\mathbb{U}_{\sqrt{q}+1}]$ is the canonical inclusion). It is clear that as rings $\overline{\mathbb{U}}_{\sqrt{q}+1}\cong  \mathbb{Z}[\mathbb{U}_{\sqrt{q}+1}]/\ker(\pi)$. Similarly, let $\phi\colon \mathbb{Z}[\mathbb{U}_{\sqrt{q}+1}]\rightarrow \textup{GF}(q)$ be the unique ring homomorphism such that $\phi([\omega])=\tilde{\omega}$ for every $\omega\in \mathbb{U}_{\sqrt{q}+1}$ (where $\tilde{\omega}$ denotes  the image of $\omega$ under an isomorphism $\mathbb{U}_{\sqrt{q}+1}\rightarrow \mathcal{S}(\textup{GF}(q))$). It is easily checked that $\ker(\pi)\subseteq \ker(\phi)$ so that $\phi$ passes to the quotient as a ring homomorphism ${\phi_0}\colon \overline{\mathbb{U}}_{\sqrt{q}+1}\rightarrow \textup{GF}(q)$ such that ${\phi_0}(\omega)=\tilde{\omega}$ for every $\omega\in \mathbb{U}_{\sqrt{q}+1}$ (we observe that the restriction of $\phi_0$ to $\mathbb{U}_{\sqrt{q}+1}$ is precisely the isomorphism $\mathbb{U}_{\sqrt{q}+1}\rightarrow \mathcal{S}(\textup{GF}(q))$ chosen). We notice that for every  integer $n$, $\phi_0(n\omega)=(n\bmod p)\tilde{\omega}$, and $\phi_0(\overline{\omega})=(\tilde{\omega})^{\sqrt{q}}=\overline{\tilde{\omega}}$. Now, let us assume that $f\colon G\rightarrow \mathbb{U}_m$.  Denoting its usual complex-valued Fourier transform by $\tilde{f}$, we have $\phi_0(\tilde{f})=\widehat{f}$. Let us assume that $f$ is bent (in the traditional meaning), \emph{i.e.}, $|\tilde{f}(\alpha)|^2=|G|$ for every $\alpha\in G$. This equivalent to $\tilde{f}(\alpha)\overline{\tilde{f}(\alpha)}=|G|$  for every $\alpha\in G$. Then, $\textup{norm}(\tilde{f}(\alpha))=\phi_0(|G|)=|G|\bmod p$ for every $\alpha\in G$, so that $f$ is bent in this finite-field setting. The following result is then proved.
\begin{theorem}
Let $m$ be a divisor of $\sqrt{q}+1$. Let $G$ be a group in the category $\mathpzc{Ab}_{\sqrt{q}+1}$. Let $f\colon G\rightarrow \mathbb{U}_m$. If $f$ is bent in the usual sense, then it is also bent in the finite-field setting sense. 
\end{theorem}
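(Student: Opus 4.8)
The plan is to transport the complex bentness identity into $\textup{GF}(q)$ along the ring homomorphism $\phi_0\colon \overline{\mathbb{U}}_{\sqrt{q}+1}\rightarrow \textup{GF}(q)$ built above, whose restriction to $\mathbb{U}_{\sqrt{q}+1}$ is the fixed group isomorphism onto $\mathcal{S}(\textup{GF}(q))$. The backbone of the argument is the compatibility of the two Fourier transforms, namely $\phi_0(\tilde{f}(\alpha))=\widehat{f}(\alpha)$ for every $\alpha\in G$. First I would verify this identity directly: writing $\tilde{f}(\alpha)=\sum_{x\in G}f(x)\tilde{\chi}_{\alpha}(x)$ as a $\mathbb{Z}$-combination of elements of $\mathbb{U}_{\sqrt{q}+1}$ (each $f(x)\in\mathbb{U}_m\subseteq\mathbb{U}_{\sqrt{q}+1}$, and each complex character value of a group in $\mathpzc{Ab}_{\sqrt{q}+1}$ lies in $\mathbb{U}_{\sqrt{q}+1}$), and applying $\phi_0$ term by term using its multiplicativity on roots of unity, one recovers exactly $\sum_{x\in G}f(x)\chi_{\alpha}(x)=\widehat{f}(\alpha)$, provided the finite-field characters $\chi_{\alpha}$ are chosen as the images of the $\tilde{\chi}_{\alpha}$ under $\phi_0$ (i.e.\ the generator choices for the cyclic factors are matched across $\mathbb{U}_{\sqrt{q}+1}$ and $\mathcal{S}(\textup{GF}(q))$).

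Next I would rewrite the hypothesis. Bentness of $f$ in the usual sense means $|\tilde{f}(\alpha)|^2=\tilde{f}(\alpha)\overline{\tilde{f}(\alpha)}=|G|$ for every $\alpha\in G$, an identity that already holds inside the subring $\overline{\mathbb{U}}_{\sqrt{q}+1}$ of $\mathbb{C}$ since both factors are $\mathbb{Z}$-combinations of roots of unity. Applying the ring homomorphism $\phi_0$ to this equation gives $\phi_0(\tilde{f}(\alpha))\,\phi_0(\overline{\tilde{f}(\alpha)})=\phi_0(|G|)$. Here $\phi_0(|G|)=|G|\bmod p$ because $\phi_0$ fixes the prime subfield (it sends an integer $n$ to $n\bmod p$), and the conjugation-compatibility $\phi_0(\overline{z})=\overline{\phi_0(z)}$ turns the left-hand side into $\widehat{f}(\alpha)\,\overline{\widehat{f}(\alpha)}=\textup{norm}(\widehat{f}(\alpha))$. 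Thus $\textup{norm}(\widehat{f}(\alpha))=|G|\bmod p$ for every $\alpha\in G$, which is precisely definition~\ref{new-bent}, completing the proof.

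The genuinely non-formal inputs are the two structural properties of $\phi_0$ invoked above, and these are where I expect the real work to sit. The well-definedness of $\phi_0$ --- equivalently the inclusion $\ker(\pi)\subseteq\ker(\phi)$ --- amounts to checking that every additive relation among the complex $(\sqrt{q}+1)$-th roots of unity persists among their images in $\mathcal{S}(\textup{GF}(q))$; concretely, since $\gcd(p,\sqrt{q}+1)=1$, the chosen generator $u$ is a primitive $(\sqrt{q}+1)$-th root of unity in $\textup{GF}(q)$ and hence a root of the reduction modulo $p$ of the cyclotomic polynomial $\Phi_{\sqrt{q}+1}$, so the relation ideal maps to $0$ in $\textup{GF}(q)$. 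The conjugation-compatibility then follows by matching $\overline{\omega}=\omega^{-1}$ in $\mathbb{C}$ with $\overline{\tilde{\omega}}=(\tilde{\omega})^{\sqrt{q}}=(\tilde{\omega})^{-1}$ in $\mathcal{S}(\textup{GF}(q))$ on generators and extending by the ring-homomorphism property. Once these are in hand (as established in the discussion preceding the statement), the transfer argument above is purely formal.
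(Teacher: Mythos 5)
Your proposal is correct and follows essentially the same route as the paper: transport the complex identity $\tilde{f}(\alpha)\overline{\tilde{f}(\alpha)}=|G|$ along the ring homomorphism $\phi_0\colon \overline{\mathbb{U}}_{\sqrt{q}+1}\rightarrow \textup{GF}(q)$, using $\phi_0(\tilde{f})=\widehat{f}$, $\phi_0(\overline{z})=\overline{\phi_0(z)}$, and $\phi_0(|G|)=|G|\bmod p$ to conclude $\textup{norm}(\widehat{f}(\alpha))=|G|\bmod p$. Your cyclotomic-polynomial justification of $\ker(\pi)\subseteq\ker(\phi)$ merely fills in what the paper dismisses as ``easily checked,'' so it is a welcome elaboration rather than a different argument.
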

\noindent This result motivates the study of such bent functions in the finite-field sense.

\subsection{Dual bent function}
Again by analogy to the traditional notion~\cite{Carlet2,KSW}, it is also possible to define a {\it dual} bent function from a given bent function. Actually, as we see it below, $|G|$ must be a square in $\textup{GF}(p)$ to ensure the well-definition of a dual bent. So by using the famous {\it law of quadratic reciprocity}, we can add the following requirement (which contrary to the other assumptions is only needed for proposition \ref{dual_bent}).
\begin{assumption}\label{lrq}
If the prime number $p$ is $\geq 3$, then $|G|$ must also satisfy $|G|^{\frac{p-1}{2}} \equiv 1\pmod p$. If the prime number $p=2$, then there is no other assumptions on $|G|$ (than those already made). 
\end{assumption}
According to assumption~\ref{lrq}, $|G|\bmod p$ is a square in $\textup{GF}(p)$, then there is at least one $x\in \textup{GF}(p)$ with $x^2=|G|\bmod p$. If $p=2$, then $x=1$. If $p\geq 3$, then we choose for $x$ the element $(|G|\bmod p)^{\frac{p+1}{4}}$. Indeed it is a square root of $|G|\bmod p$ since $((|G|\bmod p)^{\frac{p+1}{4}})^2=(|G|\bmod p)^{\frac{p+1}{2}}=(|G|\pmod p)(|G|\pmod p)^{\frac{p-1}{2}}=|G|\pmod p$. In all cases we denote by $(|G|\bmod p)^{\frac{1}{2}}$ the chosen square root of $|G|\bmod p$. Since $|G|\bmod p\not=0$, then it is clear that this square root also is non-zero. Its inverse is denoted by $(|G|\bmod p)^{-\frac{1}{2}}$. Finally it is clear that $((|G|\bmod p)^{-\frac{1}{2}})^2=(|G|\bmod p)^{-1}$.
\begin{proposition}\label{dual_bent}
Let $f \colon G \rightarrow \mathcal{S}(\textup{GF}(q))$ be a bent function, then the following function $\widetilde{f}$, called {\it dual} of $f$, is bent.
\begin{equation}
\begin{array}{l l l l}
\widetilde{f} \colon& G & \rightarrow & \mathcal{S}(\textup{GF}(q))\\
& \alpha & \mapsto & (|G|\bmod p)^{-\frac{1}{2}} \widehat{f}(\alpha)\ .
\end{array}
\end{equation}
\end{proposition}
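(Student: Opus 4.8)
The plan is to verify the two defining conditions a bent function must satisfy: first that $\widetilde{f}$ is indeed $\mathcal{S}(\textup{GF}(q))$-valued, and second that $\textup{norm}(\widehat{\widetilde{f}}(\alpha)) = (|G| \bmod p)$ for every $\alpha \in G$. The key computational input is that taking the Fourier transform twice returns the original function up to a scalar and a sign flip; precisely, equation~(\ref{double_FT}) gives $\widehat{\widehat{f}}(\alpha) = (|G| \bmod p)\, f(-\alpha)$.

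First I would check that $\widetilde{f}$ takes values in the unit circle. Since $f$ is bent, $\textup{norm}(\widehat{f}(\alpha)) = \widehat{f}(\alpha)\,\overline{\widehat{f}(\alpha)} = (|G| \bmod p)$ for all $\alpha$. For $\widetilde{f}(\alpha) = (|G|\bmod p)^{-\frac{1}{2}}\,\widehat{f}(\alpha)$ we compute
\begin{equation}
\textup{norm}(\widetilde{f}(\alpha)) = \widetilde{f}(\alpha)\,\overline{\widetilde{f}(\alpha)} = \left((|G|\bmod p)^{-\frac{1}{2}}\right)^{2}\,\textup{norm}(\widehat{f}(\alpha)) = (|G|\bmod p)^{-1}(|G|\bmod p) = 1,
\end{equation}
using that $(|G|\bmod p)^{-\frac{1}{2}}$ lies in $\textup{GF}(p)$ so it is fixed by conjugation, i.e. $\overline{(|G|\bmod p)^{-\frac{1}{2}}} = (|G|\bmod p)^{-\frac{1}{2}}$, together with the identity $((|G|\bmod p)^{-\frac{1}{2}})^{2} = (|G|\bmod p)^{-1}$ established just before the statement. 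Hence $\widetilde{f}$ is genuinely $\mathcal{S}(\textup{GF}(q))$-valued, which is what makes the bentness condition meaningful (Definition~\ref{new-bent} is stated for such maps).

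Next I would compute $\widehat{\widetilde{f}}$. By linearity of the Fourier transform in $f$ and the fact that $(|G|\bmod p)^{-\frac{1}{2}}$ is a constant scalar, $\widehat{\widetilde{f}}(\alpha) = (|G|\bmod p)^{-\frac{1}{2}}\,\widehat{\widehat{f}}(\alpha) = (|G|\bmod p)^{-\frac{1}{2}}(|G|\bmod p)\,f(-\alpha)$. Therefore
\begin{equation}
\textup{norm}(\widehat{\widetilde{f}}(\alpha)) = \left((|G|\bmod p)^{-\frac{1}{2}}(|G|\bmod p)\right)^{2}\,\textup{norm}(f(-\alpha)) = (|G|\bmod p)^{-1}(|G|\bmod p)^{2}\cdot 1 = (|G|\bmod p),
\end{equation}
where I again use that the leading scalar is in $\textup{GF}(p)$ hence self-conjugate, so it factors out of the norm as its square, and that $f$ being $\mathcal{S}(\textup{GF}(q))$-valued gives $\textup{norm}(f(-\alpha)) = 1$. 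This is exactly the bentness condition for $\widetilde{f}$.

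The only genuinely delicate point — and the reason Assumption~\ref{lrq} is invoked — is the well-definition of the scalar $(|G|\bmod p)^{-\frac{1}{2}}$ itself: we need $|G| \bmod p$ to be a nonzero square in $\textup{GF}(p)$. Nonzeroness already follows from the earlier remark that $|G|$ is coprime to $p$, and squareness is precisely what the quadratic-reciprocity-type condition $|G|^{\frac{p-1}{2}}\equiv 1 \pmod p$ (for $p\geq 3$) guarantees via Euler's criterion, with the explicit root $(|G|\bmod p)^{\frac{p+1}{4}}$ supplied before the statement; the case $p=2$ is trivial since $1$ is its own square root. Once that square root is in hand, both computations above are routine, so I expect no further obstacle; the substance of the argument is entirely carried by the double-transform formula~(\ref{double_FT}) and the scalar lying in the self-conjugate subfield $\textup{GF}(\sqrt{q})$, indeed in $\textup{GF}(p)$.
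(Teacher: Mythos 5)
Your proposal is correct and follows essentially the same route as the paper's own proof: verify that $\widetilde{f}$ is $\mathcal{S}(\textup{GF}(q))$-valued using the bentness of $f$, then apply the double-transform formula~(\ref{double_FT}) to get $\widehat{\widetilde{f}}(\alpha) = (|G|\bmod p)^{-\frac{1}{2}}(|G|\bmod p)f(-\alpha)$ and conclude from $\textup{norm}(f(-\alpha))=1$. The only (welcome) additions are your explicit justification that the scalar $(|G|\bmod p)^{-\frac{1}{2}}$ is self-conjugate, which the paper uses silently, and your recap of why the square root exists, which the paper handles in the discussion preceding the proposition.
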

\begin{proof}
Let us first check that $\widetilde{f}$ is $\mathcal{S}(\textup{GF}(q))$-valued. Let $\alpha \in G$. We have 
\begin{equation}
\begin{array}{l l l}
\widetilde{f}(\alpha)\overline{\widetilde{f}(\alpha)} &=& (|G|\bmod p)^{-\frac{1}{2}} \widehat{f}(\alpha)(|G|\bmod p)^{-\frac{1}{2}} \overline{\widehat{f}(\alpha)}\\
&=& (|G|\bmod p)^{-1}\textup{norm}(\widehat{f}(\alpha))\\
&=&1\ \mbox{(since $f$ is bent.)}
\end{array}
\end{equation}
Let us check that the bentness property holds for $\widetilde{f}$. Let $\alpha \in G$. We have $\widehat{\widetilde{f}}(\alpha) = (|G|\bmod p)^{-\frac{1}{2}}(|G|\bmod p) f(-\alpha)$ (according to formula (\ref{double_FT})). Then
\begin{equation}
\begin{array}{l l l}
\widehat{\widetilde{f}}(\alpha)\overline{\widehat{\widetilde{f}}(\alpha)}&=&(|G|\bmod p) f(-\alpha)\overline{f(-\alpha)}\\
&=& (|G|\bmod p)\textup{norm}(f(-\alpha))\\
&=& (|G|\bmod p)\ \mbox{(since $f$ is $\mathcal{S}(\textup{GF}(q))$-valued.)} 
\end{array}
\end{equation}
\end{proof}
\subsection{Construction of bent functions}
We present a  construction which is actually the translation in our setting of a simple version of the  well-known Maiorana-McFarland construction~\cite{Dill,McF} for classical bent functions.

Let $g \colon G \rightarrow \mathcal{S}(\textup{GF}(q))$ be any function. Let $f$ be the following function.
\begin{equation}
\begin{array}{l l l l}
f\colon & G^2 &\rightarrow & \mathcal{S}(\textup{GF}(q))\\
& (x,y) & \mapsto & \chi_x(y)g(y)\ .
\end{array}
\end{equation}
Then $f$ is bent. We observe that the fact that $f$ is $\mathcal{S}(\textup{GF}(q))$-valued is obvious by construction. So let us prove that $f$ is indeed bent. We use the combinatorial characterization obtained in theorem~\ref{theorem_derivative_balanced}. Let $\alpha,\beta,x,y \in G$. Then we have
\begin{equation}
\begin{array}{l l l}
d_{(\alpha,\beta)}f (x,y)&=&f(\alpha + x,\beta + y)\overline{f(x,y)}\\
&=& \chi_{\alpha + x}(\beta + y)g(\beta + y) \overline{\chi_{x}(y)}\ \overline{g(y)}\\
&=&\chi_{\alpha}(\beta + y)\chi_{x}(\beta + y)g(\beta + y) \overline{\chi_{x}(y)}\ \overline{g(y)}\\
&=&\chi_{\alpha}(\beta)\chi_{\alpha}(y) \chi_{x}(\beta)\chi_x(y)g(\beta + y) \overline{\chi_{x}(y)}\ \overline{g(y)}\\
&=&\chi_{\alpha}(\beta)\chi_{\alpha}(y)g(\beta + y)\overline{g(y)}\chi_{x}(\beta)\\
&=&\chi_{\alpha}(\beta)\chi_{\alpha}(y)g(\beta + y)\overline{g(y)}\chi_{\beta}(x)\ \mbox{(because $\chi_{x}(\beta) = \chi_{\beta}(x)$.)}
\end{array}
\end{equation}
So for $(\alpha,\beta) \in (G^2)^* = G^2 \setminus\{(0_G,0_G)\}$, we obtain
\begin{equation}\label{ega58}
\begin{array}{l l l}
\displaystyle \sum_{(x,y)\in G^2}d_{(\alpha,\beta)}f(x,y) &=& \displaystyle \sum_{(x,y)\in G^2}\chi_{\alpha}(\beta)\chi_{\alpha}(y)g(\beta + y)\overline{g(y)}\chi_{\beta}(x)\\
&=& \displaystyle \chi_{\alpha}(\beta)\sum_{y \in G}\chi_{\alpha}(y)g(\beta + y)\overline{g(y)}\sum_{x \in G}\chi_{\beta}(x)
\end{array}
\end{equation}
The sum $\displaystyle \sum_{x \in G}\chi_{\beta}(x)$ is equal to $0$ if $\beta \not = 0_G$ and $|G| \bmod p$ if $\beta = 0_G$ (according to lemma \ref{lem_sum_car_0_ou_card_G}). Then the right member of the equality (\ref{ega58}) is equal to $0$ if $\beta \not = 0_G$ and $\displaystyle (|G| \bmod p)\chi_{\alpha}(\beta)\sum_{y \in G}\chi_{\alpha}(y)g(\beta + y)\overline{g(y)}$ if $\beta = 0_G$. So when $\beta \not = 0_G$, $\displaystyle \sum_{(x,y)\in G^2}d_{(\alpha,\beta)}f(x,y) = 0$. Now let us assume that $\beta = 0_G$, then because $(\alpha,\beta) \in G^2 \setminus\{(0_G,0_G)\}$, $\alpha \not = 0_G$, we have
\begin{equation}
\begin{array}{l l l}
\displaystyle \sum_{(x,y)\in G^2}d_{(\alpha,0_G)}f(x,y) &=& \displaystyle (|G| \bmod p)\chi_{\alpha}(0_G)\sum_{y \in G}\chi_{\alpha}(y)g(0_G + y)\overline{g(y)}\\
&=&\displaystyle (|G| \bmod p)\sum_{y \in G}\chi_{\alpha}(y)\\
&&\mbox{(because $g$ is $\mathcal{S}(\textup{GF}(q))$-valued)}\\
&=&0\ \mbox{(because $\alpha \not = 0_G$.)}
\end{array}
\end{equation}
So we have checked that for all $(\alpha,\beta) \in G^2 \setminus\{(0_G,0_G)\}$, $\displaystyle \sum_{(x,y)\in G^2}d_{(\alpha,\beta)}f(x,y) = 0$ and then according to theorem \ref{theorem_derivative_balanced} this implies that $f$ is bent.

\section{Vectorial bent functions over a finite field}
\label{sec7}
In this last section is developed a notion of bentness for $\textup{GF}(q)^{l}$-valued functions defined on $G$ called {\it vectorial functions} (this is not the same meaning as in the classical literature where it means in general maps from $\textup{GF}(2)^m$ to $\textup{GF}(2)^n$, see for instance~\cite{Carlet3}). In order to treat this case in a similar way as in  the section~\ref{sec6}, we first introduce a special kind of Fourier transform needed to make clear our definitions.
\subsection{Multidimensional bent functions} 
\begin{definition}
Let $f \colon G \rightarrow \textup{GF}(q)^{l}$. The multidimensional Fourier transform of $f$ is the map $\widehat{f}^{\mathit{MD}}$ defined as
\begin{equation}
\begin{array}{l l l l}
\widehat{f}^{\mathit{MD}}:& G &\rightarrow &\textup{GF}(q)^{l}\\
& \alpha & \mapsto & \displaystyle \sum_{x \in G}\chi_{\alpha}(x)f(x)\ .
\end{array}
\end{equation}
\end{definition}
If $l=1$, then it is obvious that the multidimensional Fourier transform coincides with the classical one. Let $B$ the canonical basis of the $\textup{GF}(q)$-vector space $\textup{GF}(q)^{l}$ of dimension $l$, which is orthonormal for the dot-product $\langle \cdot , \cdot \rangle$ (see formula (\ref{Hdp})). Let $e \in B$. We define the {\it coordinate function} $f_e$ of $f \colon G \rightarrow \textup{GF}(q)^{l}$ with respect to $e$ as
\begin{equation}
\begin{array}{l l l l}
f_e : & G & \rightarrow & \textup{GF}(q)\\
& x & \mapsto & \langle f(x),e\rangle\ .
\end{array}
\end{equation}
Then according to the properties of an orthonormal basis, we observe that
\begin{equation}\label{decomposition_de_phi_suivant_base_de_H}
f(x) = \displaystyle \sum_{e \in B}f_e(x)e
\end{equation}
for each $x \in G$. Thanks to coordinate functions, it is possible to give a connection between the  Fourier transform from section~\ref{sec5} and its multidimensional counterpart.
\begin{lemma}\label{lemma_link_multi_classical_TF}
For each $\alpha \in G$, we have
\begin{equation}
\widehat{f}^{\mathit{MD}}(\alpha) = \displaystyle \sum_{e \in B}\widehat{f_e}(\alpha)e\ .
\end{equation}
\end{lemma}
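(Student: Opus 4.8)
The plan is to compute $\widehat{f}^{\mathit{MD}}(\alpha)$ directly from its definition and reduce it to the scalar Fourier transforms of the coordinate functions using the linearity of the defining sum together with the decomposition (\ref{decomposition_de_phi_suivant_base_de_H}). First I would fix $\alpha \in G$ and substitute the expansion $f(x) = \sum_{e \in B} f_e(x)e$ into the definition of the multidimensional transform. This gives
\begin{equation}
\widehat{f}^{\mathit{MD}}(\alpha) = \sum_{x \in G}\chi_{\alpha}(x)f(x) = \sum_{x \in G}\chi_{\alpha}(x)\sum_{e \in B}f_e(x)e\ .
\end{equation}

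Next I would interchange the two finite sums, which is justified since both index sets $G$ and $B$ are finite and the sums take values in the $\textup{GF}(q)$-vector space $\textup{GF}(q)^{l}$, so all the usual bilinearity and commutativity rules apply. Pulling the basis vector $e$ (which does not depend on $x$) outside the inner sum over $x$ yields
\begin{equation}
\widehat{f}^{\mathit{MD}}(\alpha) = \sum_{e \in B}\left(\sum_{x \in G}\chi_{\alpha}(x)f_e(x)\right)e\ .
\end{equation}
Finally I would recognize the inner parenthesized expression as exactly the scalar Fourier transform $\widehat{f_e}(\alpha)$ of the coordinate function $f_e \colon G \rightarrow \textup{GF}(q)$, as defined in formula (\ref{ff-fourier}) of section~\ref{sec5}, giving the desired identity $\widehat{f}^{\mathit{MD}}(\alpha) = \sum_{e \in B}\widehat{f_e}(\alpha)e$.

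This proof is essentially a routine unwinding of definitions, so I do not anticipate any genuine obstacle. The only point requiring a moment of care is the legitimacy of exchanging the order of summation and of factoring the constant basis vectors $e$ out of the sum over $x$; both are immediate because everything is finite-dimensional over $\textup{GF}(q)$ and the coefficients $\chi_{\alpha}(x)$ and $f_e(x)$ lie in the field $\textup{GF}(q)$ while the $e$ are fixed vectors. No appeal to the Hermitian structure or to orthogonality is needed here; the orthonormality of $B$ enters only implicitly through the validity of the coordinate decomposition (\ref{decomposition_de_phi_suivant_base_de_H}), which has already been established.
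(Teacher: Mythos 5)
Your proof is correct and follows exactly the paper's own argument: substitute the coordinate decomposition $f(x)=\sum_{e\in B}f_e(x)e$ into the definition of $\widehat{f}^{\mathit{MD}}$, swap the two finite sums, pull out the constant basis vectors $e$, and recognize the inner sums as the scalar transforms $\widehat{f_e}(\alpha)$. No differences of substance; your added remarks on why the interchange is legitimate and on the (merely implicit) role of orthonormality are accurate but not needed beyond what the paper does.
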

\begin{proof}
Let $\alpha \in G$.
\begin{equation}
\begin{array}{l l l}
\widehat{f}^{\mathit{MD}}(\alpha)&=&\displaystyle \sum_{x \in G}\chi_{\alpha}(x)f(x)\\
&=& \displaystyle \sum_{x \in G}\sum_{e \in B}\chi_{\alpha}(x)f_e(x)e\\
&=&\displaystyle \sum_{e \in B}\left ( \sum_{x\in G}f_e (x) \chi_{\alpha}(x)\right )e\\
&=&\displaystyle \sum_{e\in B}\widehat{f_e}(\alpha)e\ .  
\end{array}
\end{equation}
\end{proof}
Hereafter in this subsection are established some properties for the multidimensional Fourier transform similar to the corresponding properties of the ``one-dimensional" Fourier transform. So let $f : G \rightarrow \textup{GF}(q)^{l}$.
Let us compute the Fourier transform of $\widehat{f}^{\mathit{MD}}$. Let $\alpha \in G$.
\begin{equation}
\begin{array}{l l l}
\widehat{\widehat{f}^{\mathit{MD}}}^{\mathit{MD}} (\alpha) &=& \displaystyle \sum_{x \in G}\chi_{\alpha} (x) \widehat{f}^{\mathit{MD}}(x)\\
& = & \displaystyle \sum_{x \in G} \sum_{e \in B} \widehat{f_e}(x) \chi_{\alpha} (x) e\ \mbox{(according to lemma \ref{lemma_link_multi_classical_TF})}\\
& = & \displaystyle \sum_{E \in B} \left( \sum_{x \in G} \widehat{f_{e}} (x) \chi_{\alpha} (x)\right) e\\
& = & \displaystyle \sum_{e \in B} \widehat{\widehat{f_e}} (\alpha) e\\
& = & \displaystyle  (|G| \bmod p)\sum_{e \in B} f_e (- \alpha) e\ \mbox{(according to relation (\ref{double_FT}))}\\
& = & (|G| \bmod p) f(-\alpha)\ \mbox{(according to formula (\ref{decomposition_de_phi_suivant_base_de_H}))}\ .
\end{array}
\end{equation}
The equality $\widehat{\widehat{f}^{\mathit{MD}}}^{\mathit{MD}} (\alpha) =  (|G| \bmod p)f(-\alpha)$ will be useful in the sequel. Moreover the following {\it inversion formula} is proved.
\begin{equation}\label{inversion_formula_for_multi_dim_TF}
\mbox{For all}\ \alpha \in G,\ f(\alpha) = \displaystyle  (|G| \bmod p)^{-1} \sum_{x \in G} \overline{\chi_{x} (\alpha)} \widehat{f}^{\mathit{MD}}(x)\ .
\end{equation}
Now, we present a certain kind of Parseval equation in this context.
\begin{theorem}[Parseval equation]
Let $f \colon G \rightarrow \textup{GF}(q)^l$ then 
\begin{equation}
\displaystyle \sum_{x \in G}\textup{norm}_l(f(x)) =  (|G| \bmod p)^{-1} \sum_{\alpha \in G} \textup{norm}_l(\widehat{f}^{\mathit{MD}}(\alpha))\ .
\end{equation}
If $f \colon G \rightarrow \mathcal{S}(\textup{GF}(q)^l)$, then
\begin{equation}
\displaystyle \sum_{\alpha \in G} \textup{norm}_l(\widehat{f}^{\mathit{MD}}(\alpha)) = (|G| \bmod p)^2\ .
\end{equation}
\end{theorem}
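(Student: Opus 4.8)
The plan is to reduce the multidimensional Parseval equation to the one-dimensional version already established, using the coordinate-function decomposition from Lemma~\ref{lemma_link_multi_classical_TF}. First I would expand the left-hand side by writing $\textup{norm}_l(f(x)) = \langle f(x),f(x)\rangle = \sum_{e \in B} f_e(x)\overline{f_e(x)} = \sum_{e \in B}\textup{norm}(f_e(x))$, which follows from the fact that $B$ is orthonormal for $\langle\cdot,\cdot\rangle$ together with the decomposition~(\ref{decomposition_de_phi_suivant_base_de_H}). Summing over $x \in G$ and interchanging the two finite sums gives
\begin{equation}
\displaystyle \sum_{x \in G}\textup{norm}_l(f(x)) = \sum_{e \in B}\sum_{x \in G}\textup{norm}(f_e(x))\ .
\end{equation}

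Next I would apply the one-dimensional Parseval equation (the Corollary following the Plancherel formula in section~\ref{sec5}) to each coordinate function $f_e \colon G \rightarrow \textup{GF}(q)$, obtaining $\sum_{x \in G}\textup{norm}(f_e(x)) = (|G|\bmod p)^{-1}\sum_{\alpha \in G}\textup{norm}(\widehat{f_e}(\alpha))$ for every $e \in B$. Substituting this into the displayed identity and pulling the constant $(|G|\bmod p)^{-1}$ out of the sum over $B$ yields
\begin{equation}
\displaystyle \sum_{x \in G}\textup{norm}_l(f(x)) = (|G|\bmod p)^{-1}\sum_{\alpha \in G}\sum_{e \in B}\textup{norm}(\widehat{f_e}(\alpha))\ .
\end{equation}
To finish, I would reverse the coordinate computation on the Fourier side: by Lemma~\ref{lemma_link_multi_classical_TF}, $\widehat{f}^{\mathit{MD}}(\alpha) = \sum_{e \in B}\widehat{f_e}(\alpha)e$, so that $\textup{norm}_l(\widehat{f}^{\mathit{MD}}(\alpha)) = \langle \widehat{f}^{\mathit{MD}}(\alpha),\widehat{f}^{\mathit{MD}}(\alpha)\rangle = \sum_{e \in B}\textup{norm}(\widehat{f_e}(\alpha))$, again by orthonormality of $B$. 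This identifies the inner double sum with $\sum_{\alpha \in G}\textup{norm}_l(\widehat{f}^{\mathit{MD}}(\alpha))$ and establishes the first equation.

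For the second equation, I would specialize to $f \colon G \rightarrow \mathcal{S}(\textup{GF}(q)^l)$, meaning $\textup{norm}_l(f(x)) = \langle f(x),f(x)\rangle = 1$ for every $x \in G$ by definition of the hypersphere of radius $1$. Then the left-hand side of the first equation becomes $\sum_{x \in G}1 = (|G|\bmod p)$, since $\textup{GF}(q)$ has characteristic $p$. Multiplying through by $(|G|\bmod p)$ (which is invertible by the assumption that each $d_i$ divides $\sqrt{q}+1$, hence $|G|$ is coprime to $p$) gives $\sum_{\alpha \in G}\textup{norm}_l(\widehat{f}^{\mathit{MD}}(\alpha)) = (|G|\bmod p)^2$.

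The argument is essentially a bookkeeping reduction, so I expect no deep obstacle; the only point requiring care is the orthonormality step $\langle v,v\rangle = \sum_{e \in B}\textup{norm}(v_e)$, which must use both the linearity and the conjugate symmetry of $\langle\cdot,\cdot\rangle$ and the fact that cross terms $\langle e,e'\rangle$ vanish for $e \neq e'$. One subtlety worth flagging is that, as the paper itself emphasizes, $\langle\cdot,\cdot\rangle$ is not positive-definite over $\textup{GF}(q)$, so $\textup{norm}_l(v)$ can vanish for nonzero $v$; however this causes no trouble here because the identity $\langle v,v\rangle = \sum_{e}\textup{norm}(v_e)$ is a purely algebraic consequence of orthonormality and does not rely on definiteness. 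Everything else is a mechanical application of Lemma~\ref{lemma_link_multi_classical_TF} and the scalar Parseval equation.
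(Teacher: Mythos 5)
Your proof is correct and follows essentially the same route as the paper: decompose $\textup{norm}_l$ into coordinate norms via the orthonormal basis $B$, apply the scalar Parseval equation to each coordinate function $f_e$, interchange the finite sums, and recombine on the Fourier side using Lemma~\ref{lemma_link_multi_classical_TF}. Your explicit treatment of the second assertion (which the paper dismisses as obvious) and your remark that the non-definiteness of $\langle\cdot,\cdot\rangle$ is harmless here are both accurate.
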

\begin{proof}
\begin{equation}
\begin{array}{l l l}
\displaystyle \sum_{x \in G} \textup{norm}_l(f(x)) &=& \displaystyle \sum_{x \in G} \sum_{e \in B} \textup{norm}(f_e(x))\\
&=&\displaystyle (|G| \bmod p)^{-1}\sum_{e \in B} \sum_{\alpha \in G} \textup{norm}(\widehat{f_e}(\alpha))\\
&&\mbox{(according to the Parseval equation applied on $f_e$)}\\
&=&\displaystyle (|G| \bmod p)^{-1} \sum_{\alpha \in G} \sum_{e \in B} \textup{norm}(\widehat{f_e}(\alpha))\\
&=&\displaystyle (|G| \bmod p)^{-1} \sum_{\alpha \in G} \textup{norm}_l(\widehat{f}^{\mathit{MD}}(\alpha))\ .
\end{array}
\end{equation}
The second assertion is obvious. 
\end{proof}
It is possible, and even more interesting, to obtain this Parseval equation in an alternative way. Let $f,g\in (\textup{GF}(q)^l)^{G}$ and $\alpha \in G$. By replacing the multiplication by the dot-product, we define the {\it convolutional product} as follows
\begin{equation}
(f * g)(\alpha) = \displaystyle \sum_{x \in G} \langle g(\alpha + x),f(x)\rangle\ .
\end{equation}
Since $f * g \colon G \rightarrow \textup{GF}(q)$, we can compute its one-dimensional Fourier transform
\begin{equation}\label{trivialiZation}
\begin{array}{l l l}
\widehat{(f * g)}(\alpha) &=& \displaystyle \sum_{x \in G}(f * g)(x) \chi_{\alpha}(x)\\
&=& \displaystyle \sum_{x \in G} \sum_{y \in G} \chi_{\alpha} (x) \langle g(x+y),f(y)\rangle\\
&=&\displaystyle \sum_{x \in G}\sum_{y \in G} \chi_{\alpha}(x+y)\overline{\chi_{\alpha}(y)}\langle g(x+y),f(y)\rangle\\
&=&\displaystyle \sum_{x \in G}\sum_{y \in G} \langle \chi{\alpha}(x+y) g(x+y),\chi_{\alpha}(y)f(y)\rangle\\
&=&\displaystyle \sum_{y \in G}\langle \sum_{x \in G} \chi_{\alpha} (x+y) g(x+y),\chi_{\alpha} (y) f(y)\rangle\\ 
&=&\displaystyle \sum_{y \in G} \langle \widehat{g}^{\mathit{MD}}(\alpha),\chi_{\alpha} (y) f(y)\rangle\\
&=&\displaystyle \langle \widehat{g}^{\mathit{MD}}(\alpha),\sum_{y \in G} \chi_{\alpha} (y) f(y)\rangle\\
&=&\langle \widehat{g}^{\mathit{MD}}(\alpha),\widehat{f}^{\mathit{MD}}(\alpha)\rangle\ .
\end{array}
\end{equation}
It is a kind of trivialization of the convolutional product by the Fourier transform. Now let us compute $(f * g)(0_G)$. There are two ways to do this. The first one is given by definition: $(f * g)(0_G) = \displaystyle \sum_{x \in G} \langle g(x),f(x)\rangle$. The second one is given by the inversion formula of the Fourier transform.
\begin{equation}
\begin{array}{l l l}
(f * g)(0_G) &=& \displaystyle (|G|\bmod p)^{-1} \sum_{\alpha \in G}\widehat{(f * g)}(\alpha) \overline{\chi_{0_G}(\alpha)}\\
&=&\displaystyle (|G|\bmod p)^{-1} \sum_{\alpha \in G}\widehat{(f * g)}(\alpha)\\
&=& \displaystyle (|G|\bmod p)^{-1}  \sum_{\alpha \in G} \langle \widehat{g}^{\mathit{MD}}(\alpha),\widehat{f}^{\mathit{MD}}(\alpha)\rangle\ .
\end{array}
\end{equation}
Then we have $\displaystyle \sum_{x \in G} \langle g(x),f(x) \rangle = (|G|\bmod p)^{-1} \sum_{\alpha \in G}\langle \widehat{g}^{\mathit{MD}}(\alpha),\widehat{f}^{\mathit{MD}}(\alpha)\rangle$.\\
Now let $f = g$, then
\begin{equation}
\displaystyle \sum_{x \in G} \langle f(x), f(x)\rangle = (|G|\bmod p)^{-1} \sum_{\alpha \in G} \langle \widehat{f}(\alpha),\widehat{f}(\alpha)\rangle
\end{equation}
{\it i.e.},
\begin{equation}
\displaystyle \sum_{x \in G} \textup{norm}_l(f(x)) = (|G|\bmod p)^{-1} \sum_{\alpha \in G} \textup{norm}_l(\widehat{f}(\alpha))\ .
\end{equation}

\subsection{Multidimensional bent functions}
In the paper~\cite{Poi05} is introduced the notion of multidimensional bentness for $\mathcal{H}$-valued functions defined on a finite  Abelian group $G$, where $\mathcal{H}$ is a finite-dimensional Hermitian space. In this subsection, we translate this notion in our special kind of Hermitian structure. 
\begin{definition}
Let $f : G \rightarrow \mathcal{S}(\textup{GF}(q)^l)$. The function $f$ is said {\it multidimensional bent} if for all $\alpha \in G$, $\textup{norm}_l(\widehat{f}^{\mathit{MD}}(\alpha)) = (|G| \bmod p)$.
\end{definition}
\begin{lemma}\label{lemma1}
Let $f \colon {G} \rightarrow \textup{GF}(q)^l$. Then, $f(x) = (\underbrace{0,\ldots,0}_{l\ \mathit{times}})$ for all $x \in G^*$ if, and only if, $\widehat{f}^{\mathit{MD}}(\alpha) = f(0_G)$ for all $\alpha \in G$.
\end{lemma}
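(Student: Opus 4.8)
The plan is to reduce this multidimensional statement to its one-dimensional counterpart, namely the first item of Lemma~\ref{lem_f_0_f_chap_const}, by passing to coordinate functions. The two ingredients are the decomposition $f(x) = \sum_{e \in B} f_e(x) e$ of formula~(\ref{decomposition_de_phi_suivant_base_de_H}) and the identity $\widehat{f}^{\mathit{MD}}(\alpha) = \sum_{e \in B} \widehat{f_e}(\alpha) e$ established in Lemma~\ref{lemma_link_multi_classical_TF}. Since $B$ is the canonical basis of $\textup{GF}(q)^l$, every vector has a \emph{unique} expansion in $B$; in particular two elements of $\textup{GF}(q)^l$ coincide if and only if all their coordinates agree. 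I stress that this uses only that $B$ is a genuine basis, and not any nondegeneracy of $\langle\cdot,\cdot\rangle$, which indeed fails to be positive-definite in this setting.

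First I would observe that, by uniqueness of coordinates, the condition $f(x) = (0,\ldots,0)$ for all $x \in G^*$ is equivalent to $f_e(x) = 0$ for every $e \in B$ and every $x \in G^*$. Likewise, comparing the two expansions above, the condition $\widehat{f}^{\mathit{MD}}(\alpha) = f(0_G)$ for all $\alpha \in G$ is equivalent to $\widehat{f_e}(\alpha) = f_e(0_G)$ for every $e \in B$ and every $\alpha \in G$. For each fixed coordinate $e$, the first item of Lemma~\ref{lem_f_0_f_chap_const} applied to the scalar function $f_e \colon G \rightarrow \textup{GF}(q)$ asserts precisely that $f_e(x) = 0$ for all $x \in G^*$ if and only if $\widehat{f_e}(\alpha) = f_e(0_G)$ for all $\alpha \in G$. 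Quantifying this equivalence over all $e \in B$ and combining it with the two reductions just made yields the claimed equivalence.

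Alternatively, one may argue directly in the spirit of Lemma~\ref{lem_f_0_f_chap_const}. The forward implication is immediate, since $\widehat{f}^{\mathit{MD}}(\alpha) = \sum_{x\in G}\chi_\alpha(x) f(x)$ collapses to the single surviving term $\chi_\alpha(0_G) f(0_G) = f(0_G)$ once $f$ vanishes off $0_G$. For the converse one substitutes the hypothesis $\widehat{f}^{\mathit{MD}}(\alpha) = f(0_G)$ into the inversion formula~(\ref{inversion_formula_for_multi_dim_TF}), obtaining $f(x) = (|G|\bmod p)^{-1} f(0_G) \sum_{\alpha\in G}\overline{\chi_\alpha(x)}$; since $\sum_{\alpha\in G}\chi_\alpha(x) = \sum_{\alpha\in G}\chi_x(\alpha)$ vanishes for $x \neq 0_G$ by Lemma~\ref{lem_sum_car_0_ou_card_G}, so does its conjugate, and hence $f(x) = (0,\ldots,0)$ for all $x \in G^*$.

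As for difficulty, there is essentially no obstacle: the whole content is carried by the already-established one-dimensional Lemma~\ref{lem_f_0_f_chap_const} together with Lemma~\ref{lemma_link_multi_classical_TF}. The only point demanding a moment's care is that orthogonality with respect to $\langle\cdot,\cdot\rangle$ does not entail linear independence here, so the reduction to coordinates must be justified through uniqueness of expansion in the canonical basis $B$ rather than through any appeal to nondegeneracy of the form.
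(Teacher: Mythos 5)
Your proposal is correct, and your primary argument takes a genuinely different route from the paper's. The paper proves Lemma~\ref{lemma1} directly, exactly as in your ``alternative'' paragraph: the forward implication by collapsing the defining sum to the single term at $0_G$, and the converse by substituting the hypothesis into the multidimensional inversion formula~(\ref{inversion_formula_for_multi_dim_TF}) and invoking the vanishing of $\sum_{\alpha\in G}\chi_{\alpha}(x)$ for $x\neq 0_G$; indeed, the paper then remarks that the scalar Lemma~\ref{lem_f_0_f_chap_const} is recovered as the case $l=1$, so it treats the multidimensional statement as primary rather than as a corollary. Your main argument goes the other way: uniqueness of coordinates in the canonical basis $B$, the decomposition~(\ref{decomposition_de_phi_suivant_base_de_H}), and Lemma~\ref{lemma_link_multi_classical_TF} convert both sides of the equivalence into coordinate-wise statements, to which the first item of Lemma~\ref{lem_f_0_f_chap_const} applies for each $e\in B$. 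Both routes are sound. The coordinate reduction buys economy and consistency --- it re-uses the one-dimensional result and mirrors the paper's own strategy for the multidimensional Parseval equation --- whereas the direct argument is shorter and self-contained. Two small points in your favor: in the converse the paper silently replaces $\overline{\chi_{\alpha}(x)}$ by $\chi_{\alpha}(x)$ inside the sum, while you justify explicitly (via Lemma~\ref{lem_sum_car_0_ou_card_G}) that the conjugated character sum vanishes as well; and your observation that only uniqueness of expansion in $B$, not any nondegeneracy of $\langle\cdot,\cdot\rangle$, is needed is exactly the right care, since the form is not positive-definite in this setting.
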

\begin{proof} $\Rightarrow$) $\widehat{f}^{\mathit{MD}}(\alpha) = \displaystyle \sum_{x \in {G}} \chi_{\alpha}(x) f(x) = f(0_G)$ $\forall \alpha \in {G}$.\\
$\Leftarrow$) $f(x) = \displaystyle (|G|\bmod p)^{-1} \sum_{\alpha \in {G}} \overline{\chi_{\alpha} (x)} \widehat{f}^{\mathit{MD}}(\alpha)$ (by the inversion formula of the multidimensional Fourier transform). Then by assumption, $$f(x) = \displaystyle (|G|\bmod p)^{-1} f(0_G) \sum_{\alpha \in {G}} \chi_{\alpha} (x) =(\underbrace{0,\ldots,0}_{l\ \mathit{times}})$$ if $x \in G^*$, and $f(0_G)$ otherwise.  
\end{proof}
This technical result holds in particular when $l=1$ which is lemma \ref{lem_f_0_f_chap_const}.

As in the one-dimensional setting, there exists a combinatorial characterization of the multidimensional bentness. We define a kind of derivative for $\textup{GF}(q)^l$-valued functions. Another time we use the natural ``multiplication'' of $\textup{GF}(q)^l$ which is its dot-product.
\begin{definition}
Let $f \colon {G}  \rightarrow \textup{GF}(q)^l$ and $\alpha \in {G}$. The {\it derivative of $f$ in direction $\alpha$} is defined by
\begin{equation}
\begin{array}{l l l l}
d_{\alpha}f \colon & {G} & \rightarrow & \textup{GF}(q)\\
& x & \mapsto & \langle f(\alpha + x), f(x)\rangle\ .
\end{array}
\end{equation}
\end{definition}
This derivative measures the default of orthogonality between $f(x)$ and $f(\alpha + x)$.
\begin{proposition}
Let $f \colon {G} \rightarrow \mathcal{S}(\textup{GF}(q)^l)$. Then, $f$ is bent if, and only if, for all $\alpha \in G^*$, $\widehat{d_{\alpha}f}(0_G) = 0$.
\end{proposition}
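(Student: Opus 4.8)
The plan is to mirror the one-dimensional argument of Theorem~\ref{theorem_derivative_balanced}, reducing the statement to the trivialization formula~(\ref{trivialiZation}) and the inversion-type Lemma~\ref{lem_f_0_f_chap_const}, both already established. The starting observation is that, since $\chi_{0_G}\equiv 1$, one has $\widehat{d_{\alpha}f}(0_G)=\sum_{x\in G}d_{\alpha}f(x)=\sum_{x\in G}\langle f(\alpha+x),f(x)\rangle$, which is precisely the value $(f*f)(\alpha)$ of the convolutional product introduced in this subsection. Thus the quantity appearing in the statement is an autocorrelation-type function $\mathit{AC}_f(\alpha):=(f*f)(\alpha)$, which is $\textup{GF}(q)$-valued (one-dimensional), and the hypothesis ``$\widehat{d_{\alpha}f}(0_G)=0$ for all $\alpha\in G^*$'' reads exactly ``$\mathit{AC}_f(\alpha)=0$ for all $\alpha\in G^*$''.

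The key step is to Fourier-transform this one-dimensional function $\mathit{AC}_f$. First I would apply the trivialization formula~(\ref{trivialiZation}) with $g=f$, which gives directly $\widehat{\mathit{AC}_f}(\alpha)=\widehat{(f*f)}(\alpha)=\langle\widehat{f}^{\mathit{MD}}(\alpha),\widehat{f}^{\mathit{MD}}(\alpha)\rangle=\textup{norm}_l(\widehat{f}^{\mathit{MD}}(\alpha))$ for every $\alpha\in G$. This is the multidimensional analogue of Lemma~\ref{autocor}, and it is the heart of the proof; all the genuine work (the conjugate-bilinearity of the Hermitian dot product) has already been absorbed into~(\ref{trivialiZation}), so here it reduces to specializing that identity to $g=f$.

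It then remains to chain equivalences. Since $\mathit{AC}_f$ is $\textup{GF}(q)$-valued, part~(1) of Lemma~\ref{lem_f_0_f_chap_const} turns ``$\mathit{AC}_f(\alpha)=0$ for all $\alpha\in G^*$'' into ``$\widehat{\mathit{AC}_f}(\alpha)=\mathit{AC}_f(0_G)$ for all $\alpha\in G$''. Substituting the previous step on the left-hand side and computing $\mathit{AC}_f(0_G)=\sum_{x\in G}\langle f(x),f(x)\rangle=\sum_{x\in G}\textup{norm}_l(f(x))$ on the right, the $\mathcal{S}(\textup{GF}(q)^l)$-valuedness of $f$ gives $\textup{norm}_l(f(x))=1$ for every $x$, hence $\mathit{AC}_f(0_G)=|G|\bmod p$. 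The condition thus becomes $\textup{norm}_l(\widehat{f}^{\mathit{MD}}(\alpha))=|G|\bmod p$ for all $\alpha\in G$, which is exactly the definition of multidimensional bentness, completing the equivalence.

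The main difficulty is essentially bookkeeping rather than a conceptual obstacle: one must check that the convolution $(f*f)$ is formed so that it coincides with $\sum_{x\in G}d_{\alpha}f(x)$, and that the trivialization~(\ref{trivialiZation}) is specialized with the assignment of $f$ and $g$ that produces $\textup{norm}_l(\widehat{f}^{\mathit{MD}})$ rather than a mixed term (here harmless since $g=f$). Once these matchings are fixed, the argument is a direct transcription of the one-dimensional proof, with $\textup{norm}$ replaced by $\textup{norm}_l$ throughout.
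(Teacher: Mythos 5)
Your proposal is correct and follows essentially the same route as the paper: identify $\widehat{d_{\alpha}f}(0_G)$ with the autocorrelation $(f*f)(\alpha)$, apply the trivialization formula~(\ref{trivialiZation}) with $g=f$ to get $\widehat{\mathit{AC}_f}(\alpha)=\textup{norm}_l(\widehat{f}^{\mathit{MD}}(\alpha))$, invoke the vanishing-on-$G^*$ lemma, and evaluate $\mathit{AC}_f(0_G)=|G|\bmod p$ from the $\mathcal{S}(\textup{GF}(q)^l)$-valuedness of $f$. The only cosmetic difference is that you cite Lemma~\ref{lem_f_0_f_chap_const} directly while the paper cites its multidimensional form (Lemma~\ref{lemma1}), which coincides with it when $l=1$ as the paper itself notes.
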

\begin{proof}
Let us define the following autocorrelation function
\begin{equation}
\begin{array}{l l l l}
\mathit{AC}_{f} \colon & {G} & \rightarrow & \textup{GF}(q)\\
& \alpha & \mapsto & \widehat{d_{\alpha}f}(0_G)\ .
\end{array}
\end{equation}
We have
\begin{equation}
\begin{array}{l l l}
\widehat{d_{\alpha}f}(0_G) &=& \displaystyle \sum_{x \in G}d_{\alpha}f(x) \chi_{0_G}(x)\\
&=&\displaystyle \sum_{x \in {G}}d_{\alpha}f(x)\\
&=&\displaystyle \sum_{x \in {G}}\langle f(\alpha + x),f(x)\rangle\\
&=& (f * f)(\alpha)\ .
\end{array}
\end{equation}
Let us compute $\widehat{\mathit{AC}_{f}}(\alpha)$.
\begin{equation}
\begin{array}{l l l}
\widehat{\mathit{AC}_{f}}(\alpha) &=& \displaystyle \sum_{x \in {G}} \mathit{AC}_{f}(x)\chi_{\alpha} (x)\\
&=&\displaystyle \sum_{x \in {G}} (f * f)(x)\chi_{\alpha} (x)\\
&=&\widehat{(f * f)}(\alpha)\\
&=&\langle \widehat{f}^{\mathit{MD}}(\alpha),\widehat{f}^{\mathit{MD}}(\alpha)\rangle\ \mbox{(by the formula (\ref{trivialiZation}))}\\
&=&\textup{norm}_l(\widehat{f}^{\mathit{MD}}(\alpha))\ .
\end{array}
\end{equation}
Then we have\\
$\forall \alpha \in G^*$, $\widehat{d_{\alpha}f}(0_G) = 0$\\
$\Leftrightarrow$ $\forall \alpha \in G^*$, $\mathit{AC}_{f}(\alpha) = 0$\\
$\Leftrightarrow$ $\forall \alpha \in {G}$, $\widehat{\mathit{AC}_{f}}(\alpha) = \mathit{AC}_{f}(0_G)$ \mbox{(according to lemma \ref{lemma1})}\\
$\Leftrightarrow$ $\forall \alpha \in G$, $\textup{norm}_l(\widehat{f}^{\mathit{MD}}(\alpha)) = \mathit{AC}_{f}(0_G)$.\\
As $\mathit{AC}_{f}(0_G) = (f * f)(0_G) = \displaystyle \sum_{x \in {G}} \langle f(x),f(x)\rangle = \sum_{x \in {G}} \textup{norm}(f(x)) = (|G| \bmod p)$ (since $f$ is $\mathcal{S}(\textup{GF}(q)^l)$-valued), we conclude with the expected result. 
\end{proof}



\bibliographystyle{elsarticle-num}

\end{document}